\title{Worbel: Aggregating Point Labels \texorpdfstring{\\ into Word Clouds}{}} 
\titlerunning{Worbel: Aggregating Point Labels into Word Clouds} 
\author{Sujoy Bhore}{Indian Institute of Science Education and Research, Bhopal, India}{sujoy.bhore@gmail.com}{https://orcid.org/0000-0003-0104-1659}{}
\author{Robert Ganian}{Algorithms and Complexity Group, TU Wien, Vienna, Austria}{rganian@gmail.com}{https://orcid.org/0000-0002-7762-8045}{}
\author{Guangping Li}{Algorithms and Complexity Group, TU Wien, Vienna, Austria}{guangping@ac.tuwien.ac.at}{https://orcid.org/0000-0002-7966-076X}{}
\author{Martin N\"ollenburg}{Algorithms and Complexity Group, TU Wien, Vienna, Austria}{noellenburg@ac.tuwien.ac.at}{https://orcid.org/0000-0003-0454-3937}{}
\author{Jules Wulms}{Algorithms and Complexity Group, TU Wien, Vienna, Austria}{jwulms@ac.tuwien.ac.at}{https://orcid.org/0000-0002-9314-8260}{}
\authorrunning{S. Bhore et al.} 
\keywords{labeling, word clouds, categorical point data} 
\newcommand{\NP}{\textsf{\textup{NP}}}
\newcommand{\RPFA}{\textsc{\textup{RPFA}}}  
\newcommand{\DBCR}{\textsc{DBCR}}
\newcommand{\III}{\mathcal{I}}
\newcommand{\mypar}[1]{\smallskip\noindent{\bfseries #1.}\quad}
\begin{document}

\maketitle

\begin{abstract}
Point feature labeling is a classical problem in cartography and GIS that has been extensively studied for geospatial point data.
At the same time, word clouds are a popular visualization tool to show the most important words in text data which has also been extended to visualize geospatial data (Buchin et al. PacificVis 2016).

In this paper, we study a hybrid visualization, which combines aspects of word clouds and point labeling.
In the considered setting, the input data consists of a set of points grouped into categories and our aim is to place multiple disjoint and axis-aligned rectangles, each representing a category, such that they cover points of (mostly) the same category under some natural quality constraints.
In our visualization, we then place category names inside the computed rectangles to produce a labeling of the covered points which summarizes the predominant categories globally (in a word-cloud-like fashion) while locally avoiding excessive misrepresentation of points (i.e., retaining the precision of point labeling). 

We show that computing a minimum set of such rectangles is \NP-hard. 
Hence, we turn our attention to developing heuristics and exact \textsf{SAT} models to compute our visualizations.
We evaluate our algorithms quantitatively, measuring running time and quality of the produced solutions, on several artificial and real-world data sets. 
Our experiments show that the heuristics produce solutions of comparable quality to the \textsf{SAT} models while running much faster.
\end{abstract}
\section{Introduction}

Labeling graphical features in maps is a classical problem in cartography and geographic information systems (GIS), and has been extensively studied~\cite{kreveld-book,y-lal-72}. 
The map features to be labeled range from points and lines to areas.
Point labeling was introduced in computational cartography about 30 years ago~\cite{fw-ppwalm-91,ms-ccclp-91} and has since been studied in many different settings, e.g.,~\cite{AgarwalKS98,s-gaclp-01,hw-bmieipfpl-17,rr-cmmhcqplp-14}.
In many real-life applications, however, which deal with massive data, there can be many repetitions of labels. 
For instance, in twitter data, where hashtags indicate the topic of a tweet, or in categorical point-of-interest data, a few common topics appear many times.
Geographic visualizations of huge social network data, especially the ones used in anomaly detection and visual analytics~\cite{DBLP:conf/apvis/ThomBKWE12, DBLP:conf/ieeevast/MacEachrenJRPSMZB11} usually require information aggregation.

A \emph{tag map} is a visualization technique that achieves this by placing a categorical label for clusters of points in a small area at a fixed pre-determined position~\cite{slingsby2007interactive}. 
These clusters are shown using an enlarged label of the predominant category in the area. While this may lead to many overlapping labels, tag maps have been extended to prevent these overlaps, and adapt the label sizes in more sophisticated ways~\cite{DBLP:conf/apvis/ThomBKWE12, reckziegel2018predominance}. 
However, in tag maps, a point can be unlabeled, far away from its representative label, or even covered by a label of a different category.
A disadvantage of tag maps is that they provide no guarantees on the number of such misrepresented points or the distances to the correct labels.
In this sense, the precision guarantee of point labeling is lost in tag maps.


Another popular method used to aggregate text-based data in information visualisation is word clouds, where typically the objective is to highlight the most frequent keywords in a text~\cite{wordle}. More frequent keywords are visualized using a larger font size. This technique has been extended to work with geo-spatial point data that have keywords associated with each point~\cite{BuchinGeo16}.
These \emph{geo word clouds} place keywords in areas that contain many points with the associated keyword. For each keyword, the associated points are clustered, and for each cluster a good label placement is computed in terms of rotation and absolute position. The labels are scaled based on the number of points in the associated cluster, and are placed in decreasing order of their scaling factor. Whenever the currently placed label overlaps an earlier placed label, the current label is shrunk a little bit, and a new position is determined. The label is placed again, after existing labels with larger scaling factors are considered first. The shrinking ensures that it becomes easier to place the label, albeit further from the intended position.
While such word clouds allow keywords to be displayed in a non-overlapping manner with a large degree of freedom, for example in terms of orientations and absolute placements, they have similar drawbacks as the aforementioned tag maps: there are no guarantees on the precision of such labeling, as the best placement for a label can be blocked by an earlier placed label (see, e.g., Figure~7 in~\cite{BuchinGeo16}).

Therefore, we are facing the challenge to find a hybrid, label-based visualization that simultaneously retains the precision of classical point labeling, where labels are placed exactly at their corresponding points, and produces aggregated word-cloud-like labels that represent spatial patterns of keyword occurrences well. 

\begin{figure}
	\captionsetup[subfigure]{justification=centering}
	\begin{subfigure}[t]{0.47\textwidth}
	    \includegraphics[width=\textwidth]{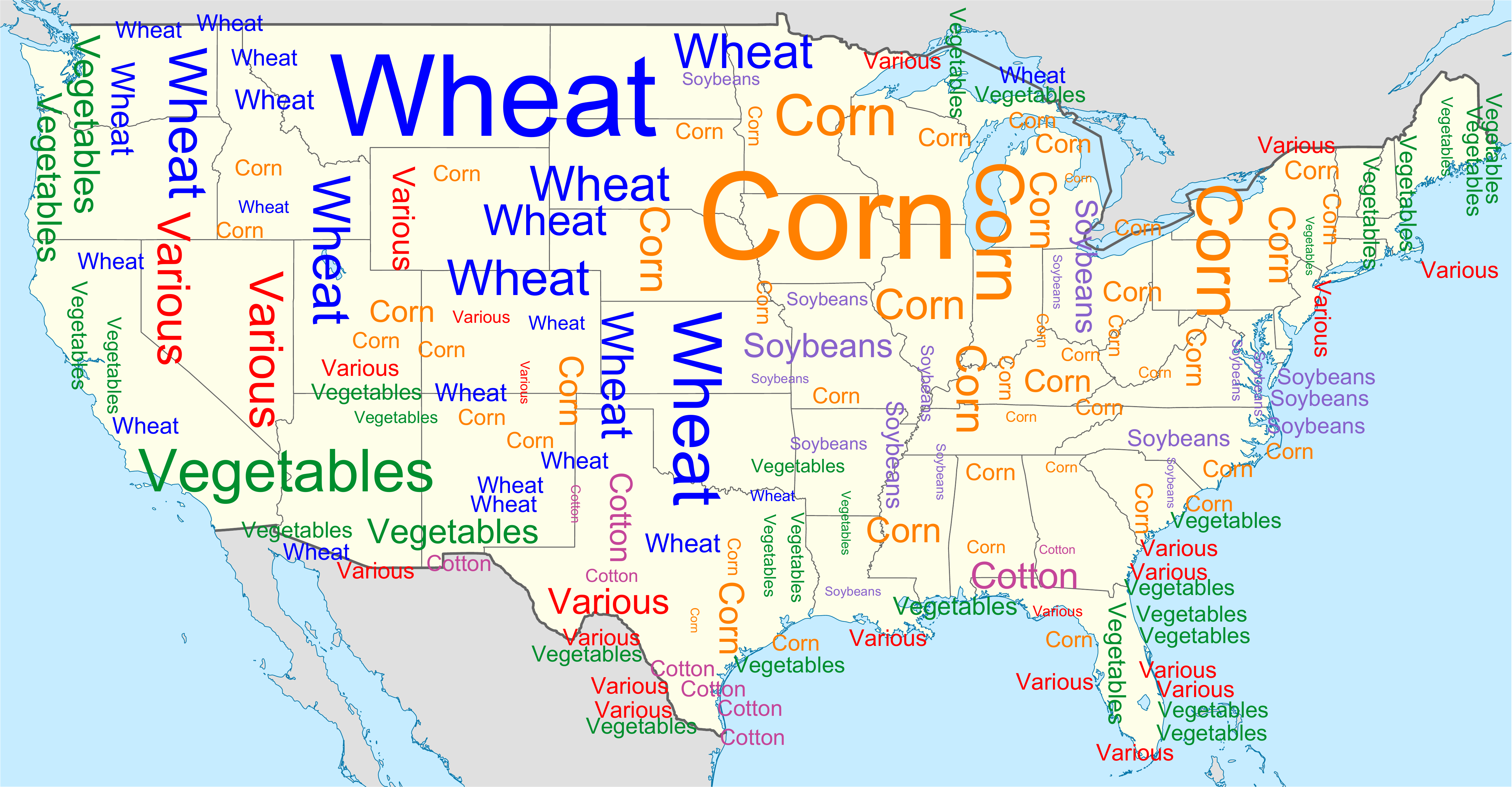}
	    \caption{}
	    \label{fig:teaser:a}
	\end{subfigure}
	\begin{subfigure}[t]{0.47\textwidth}
		\includegraphics[width=\textwidth]{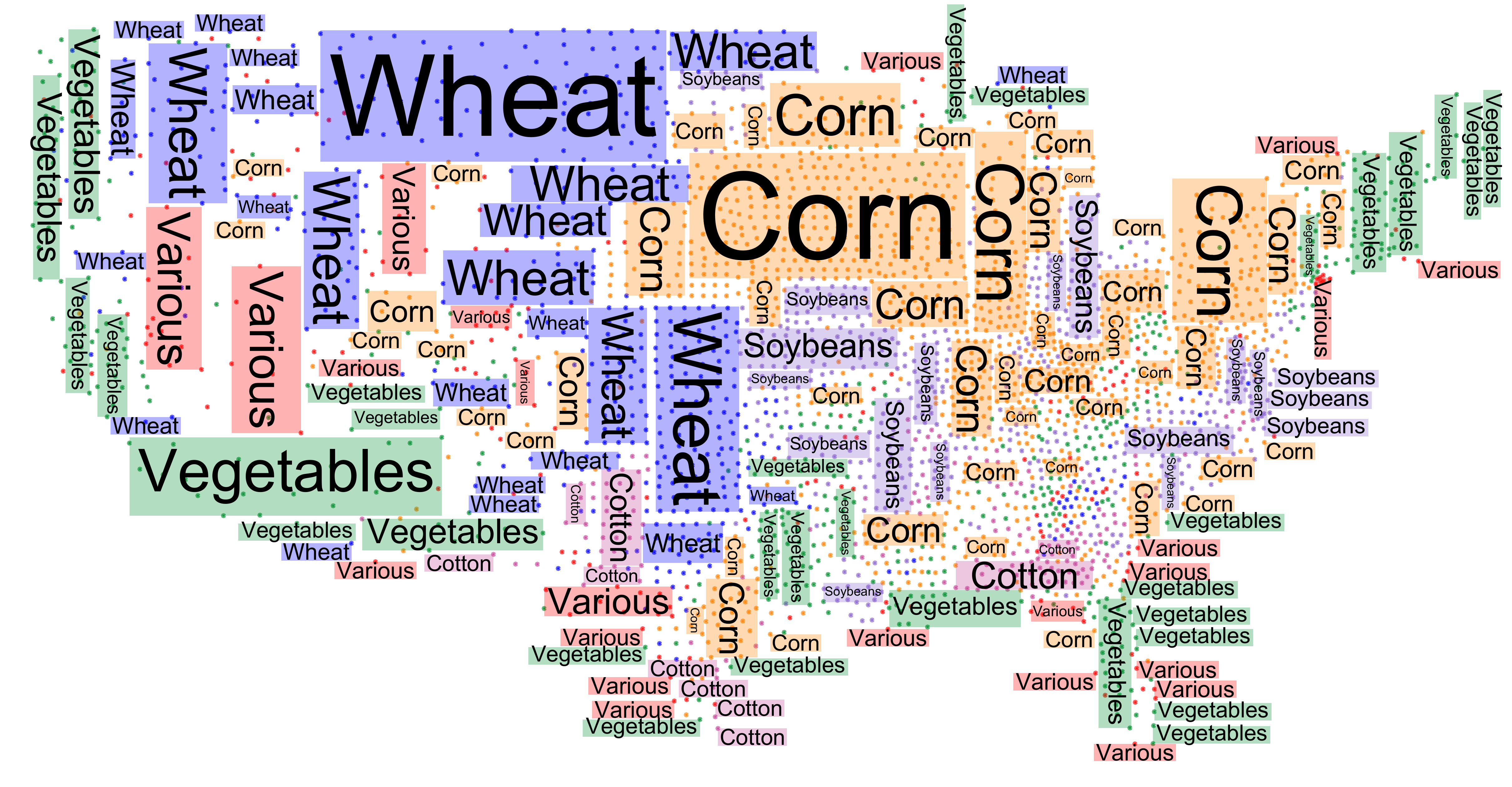}
		\caption{}
		\label{fig:teaser:b}
	\end{subfigure}
	\caption{Worbel visualizations of the dominant crop harvested in each county of the United States. \textsc{\textbf{(a)}}~A Worbel overlay on a map of the US. \textsc{\textbf{(b)}}~The same Worbel drawn over the data points, with rectangles outlining computed area per label.}
	\label{fig:teaser}
\end{figure}

\newpage

\mypar{Contributions} 
In this work, we study a hybrid visualization that combines aspects of word clouds and point labeling, which we call a \emph{Worbel}, see Fig.~\ref{fig:teaser} for an example.
The input consists of a set of points in the plane, which each have a category associated to them. We want to place disjoint textual labels, representing the categories, such that the labels spatially represent the underlying data points of the same category. We develop a formalization of this task as the problem of finding a small set of disjoint category rectangles which cover points of each category while adhering to constraints that bound the amount of \emph{misrepresentation}, the \emph{aspect ratio}, and require a \emph{minimum size}; we call this the \textsc{Rectilinear Point Feature Aggregation} (\RPFA) problem. As our first theoretical contribution, we develop a non-trivial reduction showing that \RPFA\ is \NP-hard and remains so even under severe restrictions.

Since this rules out an efficient exact algorithm for the problem, we turn our attention towards robust heuristics. 
We present a \textsc{MaxSAT}-model for finding an exact solution, as well as a simple but highly effective heuristic algorithm designed to find a sufficiently good (but not necessarily optimal) solution. In a quantitative evaluation we show that this heuristic approach computes solutions that are close in quality to an optimal solution, but requires a much lower computation time than the \textsc{MaxSAT}-model.
Our evaluation includes a case study on real-world data, in which we demonstrate the visual qualities of the Worbel technique.

\mypar{Related Work} 
Map labeling papers~\cite{AgarwalKS98, van1999point, Bhore0N20} often focus on the computation of the maximum independent set of geometric objects as an underlying combinatorial optimization problem. While this turned out to be an appropriate abstraction for point-feature labeling, it omits many labels and can only show a fraction of the input labels due to geometric packing constraints. In our setting, even though we are interested in a labeling that represents the underlying point features well, we provide an aggregated labeling. Specifically, we label multiple points of the same category with a single label and therefore produce a labeling that is closer to an area-labeling~\cite{b-mpeal-01} of implicitly defined areas of homogeneous points.


World clouds have been used as a key tool for text-based visualization, where the goal is to highlight prominent keywords in large amounts of text, emphasizing more important or frequent words by a larger font size; see~\cite{wordle, barth2014semantic, chi2015morphable}.
Moreover, the study of word clouds and tag maps have been extended to metro wordles~\cite{LiDY18} and time varying tag maps~\cite{ReckziegelJ19}.


Besides text-based geographic visualizations, there are other types of thematic maps~\cite{dent1999cartography} to portray spatial patterns of categorical point and area data, e.g., choropleth and chorochromatic maps, dot maps, and generally maps placing symbols and marks that convey data attributes by visual variables such as size, texture, color, or shape, and whose meaning is provided via lookups in a map legend.



The \RPFA\ problem has its roots in  geometric covering problems, and 
many variants of geometric covering problems have been investigated over the years, e.g., $(p,k)$-box covering~\cite{DBLP:journals/comgeo/AhnBDDKKRS11}, class cover problems~\cite{bereg2012class}, and red-blue cover~\cite{AshokKS17, red-blue-cover, DBLP:conf/soda/CarrDKM00}. 
Moreover, geometric covering problems have been widely studied in various algorithmic paradigms such as approximation algorithms (see~\cite{AgarwalP20, ClarksonV07, DBLP:journals/siamcomp/DemaineFHS08}) and parameterized algorithms (see~\cite{PradeeshaPC, AshokKS17}).
Most of these works, however, do not consider the case where the covering shapes must be disjoint. This makes our problem particularly unique, and none of these algorithms can be directly applied in our context.

\mypar{Paper Organization}
In Section~\ref{sec:model}, we introduce \RPFA\ and show that it is \NP-hard. We then present the \textsc{MaxSAT}-model and our greedy heuristic in Section~\ref{sec:algorithms}.
In Section~\ref{sec:evaluation} the results of our quantitative evaluation are presented. Finally we discuss the presented results and potential improvements in Section~\ref{sec:conclusion}.



\section{Theoretical Model}\label{sec:model}





\begin{figure}
    \centering
    \includegraphics[page=1]{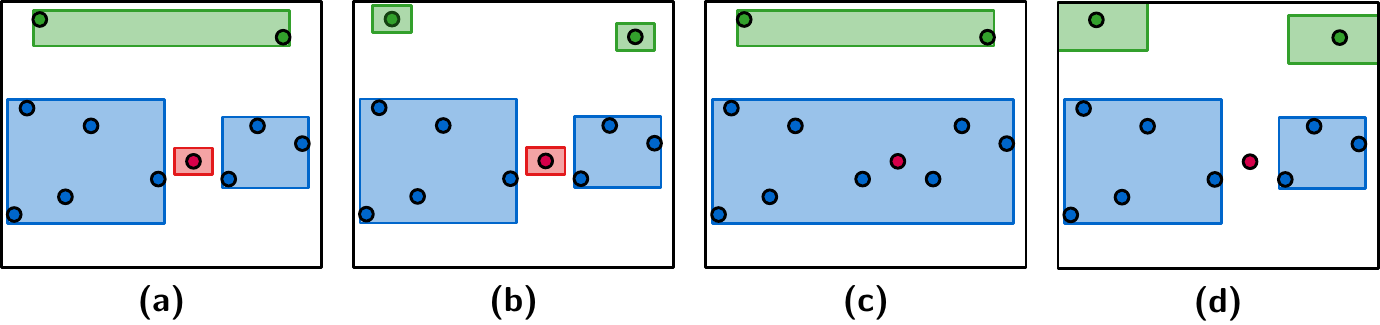}
    \caption{Solutions to \RPFA\ with varying parameters. Distinct labels are indicated as different colors. \textsf{\textbf{(a)}}~Parameters $\rho_l=0$, $\rho_u=\infty$, and $t=\rho_t=f=0$. \textsf{\textbf{(b)}}~Lowered~$\rho_u$ w.r.t.~\textsf{\textbf{(a)}}. \textsf{\textbf{(d)}}~Increased $t$ to 1 w.r.t.~\textsf{\textbf{(a)}}. \textsf{\textbf{(d)}}~Increased $f$ w.r.t.~\textsf{\textbf{(a)}}.}
        \label{fig:problem-parameters}
\end{figure}

We begin by providing a formalization of the problem that underlies the task of aggregating point features in a Worbel.

\mypar{Formal Problem Description} In \textsc{Rectilinear Point Feature Aggregation} (\RPFA), we are given a set~$P$ of points in the plane $\mathbb R^2$, a set~$L$ of different labels, 
and a label function $\ell \colon P\rightarrow L$ which assigns each point in~$P$ to a label in~$L$. 
The task is to cover all points in~$P$ with a minimum-cardinality set~$S$ of pairwise-disjoint axis-aligned rectangles, where each rectangle $R\in S$ is also assigned a label $l_R\in L$. In a Worbel, label $l_R$ will be printed inside each $R$, scaled to the size of $R$ and rotated to match the major axis. Each rectangle $R\in S$ must additionally adhere to the following constraints.

\begin{description}
 \item[Aspect ratio] The aspect ratio~$a_R$ of $R$ should be close to the aspect ratio $a_{l_R}$ of the bounding box of label~$l_R$.  We introduce parameters $\rho_l,\rho_u$, with $0 \leq $ $\rho_l < 1 < \rho_u$ as the lower and upper bounds on the ratio between $a_R$ and $a_{l_R}$. Specifically, we ensure that $\rho_l \leq \frac{a_R}{a_{l_R}} \leq \rho_u$ for each rectangle $R\in S$.
 \item[Tolerance for misrepresentation] The points covered by $R$ should ``predominantly'' be assigned to the label~$l_R$ of $R$. To this end, we introduce a tolerance threshold~$t$ and a tolerance ratio~$\rho_t$, which quantify how many points in~$R$ can be of a different label. The number~$|R|$ of points in~$R$ that have a label different from~$l_R$ is at most $\min(t, \rho_t \cdot |R|)$.
 \item[Minimum font size] We enforce a minimum font size~$f$ to ensure that labels remain readable on screen, and cannot be scaled down arbitrarily. To this end, the minor axis of~$R$ is not allowed to be smaller than~$f$.  
\end{description}

Observe that it may not be possible to cover all points by disjoint rectangles due to the constraint on the minimum font size; in these cases we simply aim to cover as many points as possible. we say that a set $S$ of disjoint rectangles is \emph{viable} if it covers a maximum-cardinality subset of $P$
while satisfying all the constraints specified above. Hence, in its full generality an instance of \RPFA\ is a tuple $(P,L,\ell,\rho_l,\rho_u,t,\rho_t,f)$, and an \emph{optimal solution} is a minimum-cardinality viable set $S$ of rectangles. For complexity-theoretic purposes, let \RPFA$^*$ be the decision of version of \RPFA; there, we are additionally given a bound $k\in \mathbb{N}$ and ask whether there exists a set $S$ of viable rectangles of cardinality at most $k$.


\mypar{Problem Complexity}
It is easy to observe that \RPFA$^*$ is in \NP. As one of our main theoretical contributions, we will show that \RPFA$^*$ is not only \NP-hard, but it remains \NP-hard even when restricted to a very specific subset of inputs. In particular, let \RPFA$_0$ be the restriction of \RPFA$^*$ to inputs where $L=\{$red, blue$\}$, $\rho_l=0$, $\rho_u=\infty$, and $t=\rho_t=f=0$. Intuitively, \RPFA$_0$ is the restriction of \RPFA$^*$ to the subcase where all constraints are disregarded and we have only two labels.

\begin{theorem}
\label{thm:hard}
\RPFA$_0$ is \NP-hard.
\end{theorem}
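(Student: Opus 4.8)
The plan is to reduce from a known NP-hard geometric covering or packing problem, exploiting the fact that in $\RPFA_0$ the only real constraint that survives is that the chosen rectangles be \emph{pairwise disjoint} and that each rectangle is \emph{monochromatic} (since $t=\rho_t=0$, a rectangle labeled red may contain no blue points and vice versa). So the decision question becomes: can we partition the red points and the blue points into at most $k$ axis-aligned rectangles in total, such that no two of the rectangles overlap (even across colors)? The natural source problems are \textsc{Rectilinear Partition}/\textsc{Rectangle Cover} variants, or a \textsc{Planar 3-SAT}-style reduction in the spirit of many map-labeling hardness proofs; I would first try to massage one of the ``partition a point set / orthogonal region into few rectangles'' NP-hardness results, and if the disjointness-across-colors interaction is hard to control that way, fall back on a gadget reduction from \textsc{Planar 3-SAT} or from \textsc{Vertex Cover} on planar cubic graphs.

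Concretely, here is the route I expect to work. First I would design \emph{variable gadgets}: a cyclic chain of red and blue points arranged so that there are exactly two minimum-size ways to cover the chain by disjoint monochromatic rectangles (the ``true'' tiling and the ``false'' tiling), any other covering needing strictly more rectangles. The two colors alternate along the chain so that a rectangle is forced to be short (it cannot swallow a point of the other color), and the geometry forces a domino-like two-coloring of the covering. Next, \emph{clause gadgets}: a small constant-size cluster of points placed where three variable chains come close, arranged so that the cluster can be absorbed into the rectangles of an adjacent chain ``for free'' (without increasing the count) iff at least one of the three incident variables is set to the literal that the clause wants; otherwise one extra rectangle is unavoidable. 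Then \emph{wires/propagation}: long straight stretches of alternating-color points that rigidly transmit the chosen tiling parity from a variable gadget to each clause it participates in, with corner and branching gadgets that again have a unique cheapest tiling per incoming parity. Laying this out on a grid using the planarity of the formula keeps all coordinates polynomially bounded and keeps all gadgets pairwise far apart except where they are designed to interact; one then sets $k$ to be the exact sum of the per-gadget minimum rectangle counts, so that a solution of size $\le k$ exists iff every gadget is covered optimally, which happens iff the chosen parities satisfy the formula.

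Two points need care. First, I must verify the gadgets are \emph{rigid}: with $\rho_l=0,\rho_u=\infty,f=0$ there is enormous freedom in rectangle shapes, so I cannot rely on a rectangle being ``too big'' — rigidity must come purely from the colored point positions (a rectangle is blocked from growing only by an opposite-colored point, or by the requirement of remaining disjoint from other needed rectangles). This means the clause/wire gadgets should be built so that \emph{every} monochromatic covering of a gadget's points uses at least its nominal number of rectangles, with equality forcing one of a small set of combinatorial ``states'', and these states compose correctly with neighbors. Second, I need to rule out ``cheating'' solutions that merge points from two different gadgets into one rectangle across the empty space between them; this is handled by spacing gadgets so that any rectangle spanning two gadgets would have to contain an opposite-colored blocker, or by placing sentinel points. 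The main obstacle, and the bulk of the work, is exactly this rigidity-and-composability analysis: proving the lower bound ``$\ge k$ rectangles always, with equality only for satisfying assignments'' for the combined construction, since the upper bound (a satisfying assignment yields a size-$k$ covering) is a routine explicit construction. Membership in \NP\ is immediate — as already noted in the text, \RPFA$^*\in\NP$ — so certifying that $\RPFA_0$ inherits \NP-hardness reduces to checking the reduction is polynomial-time, which the grid layout guarantees.
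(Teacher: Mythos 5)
Your proposal correctly isolates the operative structure of \RPFA$_0$ --- every rectangle must be monochromatic (since $t=\rho_t=0$) and the only obstacles to growing a rectangle are opposite-colored points and disjointness --- but it stops at the point where the actual proof would have to begin. What you submit is a plan with two alternative strategies (``massage a rectangle-partition hardness result, or fall back on Planar 3-SAT gadgets''), and for the gadget route you never construct a single gadget: the variable chain ``with exactly two minimum tilings,'' the clause cluster that is ``absorbed for free iff a literal is satisfied,'' and the parity-transmitting wires are all asserted, not built. You yourself identify the crux --- rigidity under $\rho_l=0$, $\rho_u=\infty$, $f=0$, where rectangles may be arbitrarily thin, degenerate, or huge --- and then leave it unresolved. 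Your sketch of the variable gadget is also already problematic as stated: if the colors strictly alternate point by point along the chain, every monochromatic rectangle covers exactly one chain point, the unique optimal covering is all singletons, and no binary ``parity'' choice arises; you would need runs of equal-colored points paired domino-style, plus an argument that no two-dimensional rectangle can shortcut across a bend of the chain or between gadgets. Without these concrete constructions and the accompanying lower-bound and composability lemmas, the reduction is not established.

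For comparison, the paper avoids designing satisfiability gadgets altogether: it reduces from the recently proven \NP-hard problem \DBCR\ of covering points inside a rectilinear polygon $\psi$ by at most $\omega$ pairwise disjoint axis-aligned rectangles, and the entire technical effort goes into \emph{simulating the polygon} with two-colored points. The grid cells outside $\psi$ are filled with alternating red and blue layers and the boundary of $\psi$ receives a carefully checkered coloring, so that any rectangle touching the boundary is forced to stay on the boundary and the interior of $\psi$ is reserved for red rectangles covering the original elements; the budget $k$ is the exact sum of the layer, boundary, and interior costs. If you want to complete your approach you must either carry out the full gadget design together with the rigidity analysis, or, more economically, start from an already-hard disjoint-covering problem as the paper does.
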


To prove Theorem~\ref{thm:hard}, we reduce from a problem called \textsc{Disjoint Box Covering in a Rectilinear Polygon} (\DBCR). In \DBCR, we are given a rectilinear polygon $\psi$ (possibly with holes), a set $\alpha$ of $n$ points inside $\psi$ (which we call \emph{elements} for disambiguation), and a bound $\omega$.

The question is whether there exists a set $\mathcal{Q}$ of at most $\omega$ pairwise disjoint axis-aligned rectangles such that each element in $\alpha$ is contained in (i.e., \emph{covered by}) a rectangle in $\mathcal{Q}$ and each rectangle is fully contained inside $\psi$. 

Two reflex corners $x,y$ of $\psi$ are called \emph{opposite} if the unique axis-aligned rectangle $U$ which has $x$ and $y$ as its corner points is fully contained in $\psi$ and only intersects the boundary of $\psi$ in $x$ and $y$; in this case, we call $U$ the $x$-$y$-\emph{spanning rectangle}. \DBCR\ was recently shown to be \NP-hard, even when restricted to instances with the following properties~\cite[Theorem 2.3]{blnw-dcrp-21}.
\begin{itemize}
\item each $x$-$y$-spanning rectangle contains at least one element, 
\item each corner of $\psi$ and element lie on a finite integral \emph{embedding grid} whose size is polynomial in the input size, and 
\item each pair of two elements and also each pair of one element and one corner of $\psi$ have distinct $x$- and $y$- coordinates (i.e., they lie in general position with respect to each other).
\end{itemize}

\begin{proof}[Proof of Theorem~\ref{thm:hard}]
We provide a polynomial reduction from \DBCR\ to \RPFA$_0$. Intuitively, the main difference between the two problems is that while in \DBCR\ there is a clearly defined bounding polygon $\psi$, in \RPFA$_0$ the placement of rectangles is restricted exclusively by the presence of points with a different label (which themselves need to be covered by a rectangle of the other color). To overcome this difficulty, the reduction will use a construction that simulates the presence of $\psi$; this is also where the property of spanning rectangles mentioned earlier will come into play.
\begin{figure}
    \centering
    \includegraphics{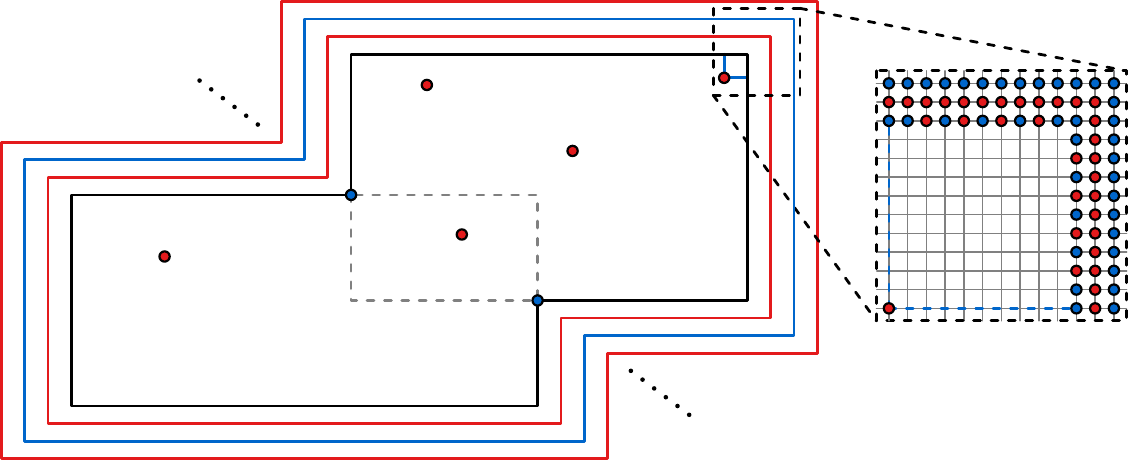}
    \caption{Construction of $\III$: the initial grid is refined by a factor 10, there are alternating layers of red and blue points around initial polygon $\psi$, and the boundary consists of (mostly) alternating red and blue points. Note that each $x$-$y$-spanning rectangle will contain a red point.}
    \label{fig:reduction}
\end{figure}
The reduction will take as input an instance $\III'=(\psi',\alpha',\omega)$ of \DBCR\ and construct an instance $\III$ of \RPFA$_0$. Figure~\ref{fig:reduction} shows illustrates one such instance $\III$. We begin by refining the resolution of the embedding grid by a factor of $10$; the sole purpose of this is to ensure that the coordinates of each element and corner differ by at least $10$. We will call the points of the embedding grid that lie on the boundary of $\psi$ \emph{boundary points}. We iteratively partition all the points 
outside of $\psi$, but inside the embedding grid, into layers as follows. The first layer contains all points which are adjacent to a boundary point in the 8-point neighborhood topology (i.e., these are either axis-adjacent or diagonally adjacent to a boundary point). Once we have identified the set of layer-$i$ points, we define the set of layer-$(i+1)$ points by repeating this procedure---in particular, a point is in layer $i+1$ if it does not belong to any layer up to $i$ and is adjacent to a point in layer $i$ in the 8-point neighborhood topology.

We are now ready to construct the instance $\III$ of \RPFA$_0$, which will be the output of our reduction. First, we add into $P$ all points in odd layers as blue points, and all points in even layers as red points. Observe that this creates a sequence of alternating red and blue axis-aligned polygonal contours around 
$\psi$, see Fig.~\ref{fig:reduction}. Let $\delta$ be the minimum number of straight-line segments required to cover all edges of these contours, and notice that $\delta$ can be computed in polynomial time via a straightforward greedy procedure. 

Each element in $\alpha$ will also be added to $P$ as a red point. Intuitively, we now have an instance of \RPFA$_0$ where we can ascertain precisely how many rectangles are needed to cover the points in the layers as long as the points in layer $1$ are not allowed to be covered by rectangles that intersect $\psi$. Moreover, under the same assumption about the rectangles covering the points in layer $1$, the internal area of $\psi$ could be used exclusively for red rectangles covering $\alpha$, precisely matching the behavior of the original instance $\III'$. Unfortunately, there is currently nothing that would prevent the blue rectangles covering the boundary points from entering $\psi$; that is where the next, crucial step of the reduction comes in.

We say that a boundary point is \emph{important} if it is either a corner of $\psi$, or if it has the same $x$- or $y$-coordinate as a point in $\alpha$. All important boundary points will be added to $P$ as blue.
All remaining boundary points will also be added to $P$ in a way which ensures that every two such points that lie ``opposite'' to each other on the boundary of $\psi$ receive a different color. In particular, each non-important boundary point that lies on:
\begin{itemize}
\item a bottom boundary of $\psi$ is red $\iff$ its $x$-coordinate is odd;
\item a top boundary of $\psi$ is blue $\iff$ its $x$-coordinate is odd;
\item a left boundary of $\psi$ is red $\iff$ its $y$-coordinate is odd;
\item a right boundary of $\psi$ is blue $\iff$ its $y$-coordinate is odd.
\end{itemize}

The result is that the boundary points form a checkered red-blue pattern, but with some local irregularities around the important points. Finally, we set $k$ to be equal to the sum of (a) $\delta$ (b) $\omega$, and (c) the total number $\sigma$ of color switches that occur on the boundary points when performing a walk along each boundary segment of $\psi$.


Clearly, the construction described above can be carried out in polynomial time, and hence all that remains is to argue correctness. For the forward direction, assume that $\III'$ was a \texttt{YES}-instance of \DBCR\ and admits a solution $\mathcal{Q}$ (i.e., a set of rectangles covering $\alpha$, contained in $\psi$, and of size at most $\omega$). We can then construct a solution $S$ for $\III$ by (a) covering all the points in the layers using $\delta$-many rectangles, (b) having one red rectangle copy the placement of each of the at most $\omega$-many rectangles in $\mathcal{Q}$, and (c) using one appropriately-colored rectangle to cover each consecutive set of red or blue boundary points that occur on a walk around each boundary segment of $\psi$ ($\sigma$-many in total). It is easy to verify that this construction produces a set $S$ of at most $k$ rectangles which cover all the points in $P$.

On the other hand, assume that $\III$ admits a solution $S$. First of all, we will argue that $S$ can be assumed to use $\delta$-many rectangles to cover all points that lie in the individual layers without interfering with $\psi$. For each layer other than the first, each point can only lie in a rectangle with other points in the same layer which makes the claim obvious, but a careful argument is needed for the first layer. 

To this end, assume that $S$ covers some blue point $p$ in layer $1$ with a rectangle $T$ that also contains a blue point $q$ that does not lie in layer $1$; this point would have to be a boundary point. But by the coloring used for the boundary points, $T$ can then only contain at most two other blue boundary points adjacent to $q$ and at most two other blue points in layer $1$ adjacent to $p$. Let $a$ and $b$ be the two unique blue points on layer $1$ that are adjacent to $T$ when walking along layer $1$, and notice that the grid refinement we performed at the very beginning prevents $a$ and $b$ from both being corners simultaneously; without loss of generality, let us assume that $b$ is not a corner. Then the two axis-neighboring points of $b$ outside of layer $1$ can be assumed to be red; if they are not, we can add them to $T$ along with their axis-neighbors on the boundary without affecting the validity of $S$. Let $B$ be the rectangle in $S$ which contains $b$, and notice that $B$ can only contain points in layer $1$. We can now alter $S$ as follows: extend $B$ to cover all blue points in $T$, and restrict $T$ only to those points which lie outside of layer $1$. 


We have now obtained a new solution which has equal size as the original $S$, but contains one less occurrence of a rectangle containing points in layer $1$ and also on the boundary. By iterating this procedure, we arrive at an equivalent solution $S$ such that no rectangle containing points in layers intersects $\psi$; these rectangles hence form walks along the individual layers and $S$ contains at least $\delta$-many of them.

Now, consider the rectangles in $S$ which contain at least one boundary point. Crucially, by combining
\begin{itemize}
\item the construction of the colors on the boundaries, \item the fact that each spanning rectangle in $\III'$ contains at least one point and all boundary corner points are blue, and 
\item the fact that all corners lie in general position,
\end{itemize}
we obtain that every rectangle in $S$, containing at least one boundary point, can only contain other boundary points, and that these must lie in a consecutive strip of points along the boundary; in other words, we may assume without loss of generality that such a rectangle is either a line segment which follows the boundary, or a $2\times 2$ square containing one blue corner point along with two adjacent blue points\footnote{A careful reader may notice that this degenerate case may only occur for points in the top-right corner, but such rectangles do not intersect the space required for covering the red points in $\psi$ due to the fact that each red point has a gap of at least $10$ grid points from every corner point due to the grid refinement used.}.

As a consequence, we obtain that all rectangles in $S$ can be partitioned into three pairwise-disjoint sets: those used to cover the points in the layers, those used to cover the boundary points, and those used to cover the points inside $\psi$. Since $|S|\leq k$ and the first two subsets of $S$ have a cardinality of at least $\delta$ and $\sigma$, we obtain that the number of rectangles in the third set is at most $\omega$. But then the set of rectangles used by $S$ to cover the points inside $\psi$ is a solution for $\III'$, as required. \qedhere
\end{proof}

In view of Theorem~\ref{thm:hard}, we cannot hope for an exact algorithm for \RPFA\ with a polynomial-time runtime guarantee. That is why we turn to efficient heuristics for the problem, and show that these can produce good Worbels in many scenarios of interest.
       
\section{Algorithms}\label{sec:algorithms}
We introduce a greedy heuristic and an exact \textsf{SAT} model for solving \RPFA. Both methods take a set of candidate rectangles as input, and compute a viable set of disjoint rectangles among the candidates. We first discuss how to find an appropriate set of candidate rectangles, given the input parameters, aspect ratio bounds~$\rho_l, \rho_u$, tolerance~$t,\rho_t$, and minimum font size~$f$.

\subsection{Computing Candidate Rectangles}
\label{subsec: candidates set}
A naive idea for computing the candidate set~$\mathcal{R}$ of rectangles
would be to simply consider the set of all rectangles whose left, right, bottom and top boundary touches some point in $P$. This would result in $\Theta(|P|^4)$ combinatorially different rectangles, from which a subset $S$ of at most $|P|$ disjoint rectangles to cover $P$ will be selected.


Since we want the candidate rectangles to adhere to the input parameters, we often need to consider far less rectangles. Hence we instead consider each pair of distinct points $p_i,p_j\in P$, and compute a set~$\mathcal{R}_{i,j}$ of candidate rectangles as follows. We first compute the smallest bounding box~$R_{\text{base}}$ of $p_i$ and $p_j$. This rectangle will be the base for a set of rectangles that extends leftward and rightward, inside the horizontal strip defined by $p_i$ and $p_j$ (see Figure~\ref{fig:candidate-strip}).

We first check which label $l\in L$ is the predominant label among the points in~$R_{\text{base}}$, and we additionally make sure that the number of points with a label different from $l$ does not exceed $\min(t, \rho_t \cdot |R|)$.
If $R_{\text{base}}$ fails the tolerance check, 
we move to the next pair $p_i,p_j\in P$, otherwise we assign $l_{R_{\text{base}}} \coloneqq l$ and proceed.

\begin{figure}
    \centering
    \includegraphics{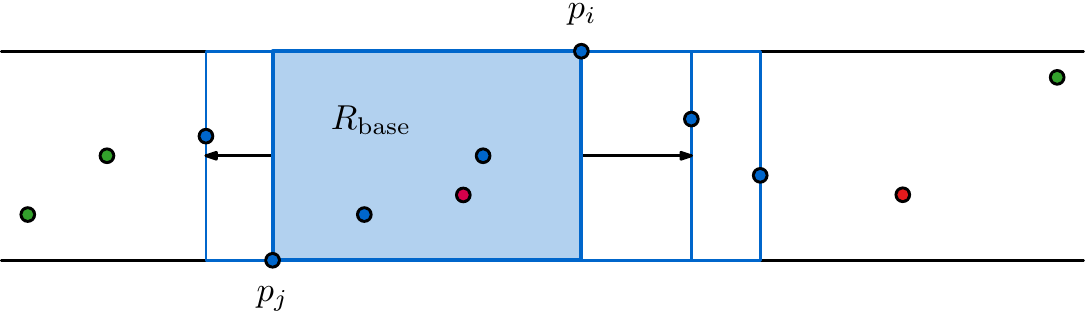}
    \caption{Horizontal strip defined by $p_i$ and $p_j$, in which $R_{\textnormal{base}}$ will be extended leftward and rightward, until one of the input constraints (here misrepresentation) is violated.}
    \label{fig:candidate-strip}
\end{figure}

Rectangle $R_{\text{base}}$ is now extended leftwards, creating a new candidate for each additional point covered by an extended rectangle. Observe that this operation has the following consequences for the aspect ratio, tolerance and minimum font size. The aspect ratio of vertically oriented rectangles will increase, until they become squares and the orientation becomes horizontal. From that point on, the aspect ratio will decrease, and we stop extending when the aspect ratio of an extended rectangle $R$, with label~$l_R$ and label aspect ratio $a_{l_R}$ falls below $\rho_l \cdot a_{l_R}$. The number~$m$ of points with a label different from $l_R$ may both increase or decrease, since newly covered points may be of label~$l_R$ or of a different label, respectively. However we stop extending as soon as $m$ exceeds $\min(t, \rho_t \cdot |R|)$. The font size will monotonically increase, and therefore does not influence when we stop extending. Thus for all extended rectangles we check the aspect ratio and tolerance, to see if we need to stop extending. Rectangle~$R_{\text{base}}$ along with the extended rectangles will form the set $\mathcal{R}_{\text{left}}$.
\begin{figure}[b]
\centering
	\captionsetup[subfigure]{justification=centering}
	\begin{subfigure}[t]{0.4\textwidth}
	\centering
		\includegraphics[page=1]{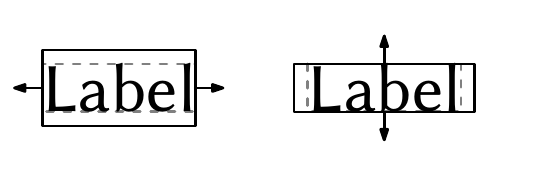}
		\caption{}
	\end{subfigure}
	\begin{subfigure}[t]{0.55\textwidth}
	\centering
	    \includegraphics[page=2]{figs_arxiv/extend-rectangle.pdf}
	    \caption{}
    \end{subfigure}
    \caption{Stretching $R$ closer to the aspect ratio of the corresponding label~$l_R$. \textsf{\textbf{(a)}} Stretching along major/minor axis depends on aspect ratio of $R$ compared to $l_R$. \textsf{\textbf{(b)}} Three candidates for stretching: upwards, downwards, or both.}
    \label{fig:extend-rectangle}
\end{figure}
We then do the same extension operation on all rectangles in $\mathcal{R}_{\text{left}}$, but this time we extend rightwards. We have the same stopping criteria, checking for aspect ratio and tolerance of the extended rectangles. The newly extended rectangles, together with the rectangles in $\mathcal{R}_{\text{left}}$, form the set $\mathcal{R}_{\text{ext}}$.

We check for each rectangle $R\in \mathcal{R}_{\text{ext}}$, with label~$l_R$ and label aspect ratio~$a_{l_R}$, whether the aspect ratio~$a_R$ is close enough to the aspect ratio~$a_{l_R}$, by ensuring that $\rho_l \leq \frac{a_R}{a_{l_R}} \leq \rho_u$. If this is not the case, then we try to stretch $R$ such that its aspect ratio gets closer to $a_{l_R}$, without covering any additional points. We do so by increasing the side length along either the major or minor axis (see Figure~\ref{fig:extend-rectangle}a). For example, assume the major axis of $R$ is horizontal and $a_R < \rho_l \cdot a_{l_R}$. A similar procedure can be used when $R$ is vertical or $a_R > \rho_u \cdot a_{l_R}$. We stretch $R$ along its minor axis, creating up to three new candidates $\{R^u, R^d, R^c\}$: $R^u$ stretches only upwards, $R^d$ stretches only downwards, and $R^c$ uses space above and below $R$ (see Figure~\ref{fig:extend-rectangle}b). We stretch $R$ until $a_R = \rho_l \cdot a_{l_R}$, or until we hit another point $p\not\in R$; we never introduce new points into $R$ since the resulting candidates will be handled by a different choice of $p_i,p_j$ anyway. (If $R$ is vertical those candidates would already be in~$\mathcal{R}_{\text{ext}}$.)

We try to center $R^c$ around $R$, using equal space above and below, but a point $p\not\in R$ may be located nearby and this may force us to make adjustments. For example, if $p$ is located above $R$, then we use all the available space above $R$, and as much as necessary below $R$, to ensure $a_R = \rho_l \cdot a_{l_R}$. However, if there is also a point too close below $R$, we cannot reach $a_R = \rho_l \cdot a_{l_R}$ without covering additional points. We remove $R$ from $\mathcal{R}_{\text{ext}}$ and add the subset of $\{R^u, R^d, R^c\}$ of rectangles which pass the aspect ratio check and do not cover any additional points.

Next, we check for each rectangle $R\in\mathcal{R}_{\text{ext}}$ whether label~$l_R$ can be placed inside $R$ at the minimum font size~$f$. Remember that we always align the major and minor axis of a rectangle and its assigned label. We remove $R$ from $\mathcal{R}_{\text{ext}}$ if this check is not passed. The remaining rectangles form the candidate set~$\mathcal{R}_{i,j}$.

Notice that a candidate set~$\mathcal{R}_{i,j}$ does not produce any rectangles covering single points. To this end, we add candidates $\mathcal{R}_i$ for each point $p_i\in P$ by using the aforementioned stretching operation both horizontally and vertically from $p_i$. Since we created three options when stretching in one direction, there are nine possible outcomes when stretching both horizontally and vertically. The union of all candidate sets $\mathcal{R}_{i,j}$ and $\mathcal{R}_i$ forms the candidate set~$\mathcal{R}$ that we use in our greedy heuristic and the \textsf{SAT} model.
$$\mathcal{R} \coloneqq \bigcup_{\{p_i,p_j\}\subseteq P} \mathcal{R}_{i,j} \cup \bigcup_{p_i\in P} \mathcal{R}_i$$

Given our candidate set~$\mathcal{R}$ of rectangles that satisfy our constraints, we want to find a smallest subset $S\subseteq \mathcal{R}$ of disjoint rectangles that covers all points in $P$ or---if the minimum font size prevents some points from being covered---as many points in $P$ as possible. Additionally, each candidate set $\mathcal{R}_{i,j}$ does not necessarily contain all viable rectangles, since we stop whenever the tolerance restriction is not met. In case we do not find all viable rectangles, even the smallest subset $S\subseteq\mathcal{R}$ will not be an optimal solution for \RPFA, but it is the best solution we can find heuristically.



\subsection{Weighted Independent Set Model}
\label{subse:wmis_model}
In this section, we show how to find a good set of disjoint rectangles in the candidate set~$\mathcal{R}$, by modeling our problem as the \textsc{Weighted Independent Set} (\textsc{WIS}) problem in the intersection graph on~$\mathcal{R}$.
A benefit of this approach is that it allows us to utilize algorithmic approaches developed for this well-studied problem in previous works~\cite{ChalermsookW21, AdamaszekW13, GKMMPW-4appx}.
In our setting, \textsc{WIS} takes as input a set of rectangles $\mathcal{R}$, and weight function $w: \mathcal{R} \rightarrow \mathbb{R^+}$ which assigns a positive weight to each rectangle in $R$. We want to find a set $S\subseteq \mathcal{R}$, such that the rectangles in $S$ are pairwise disjoint and their sum of weights is maximized.

Given an instance~$\III$ of \RPFA\ and a candidate set $\mathcal{R}$ consisting of all viable rectangles in the instance $\III$, we construct an instance~$\III'$ of \textsc{WIS} by assigning weights to all candidate rectangles in~$\mathcal{R}$ as follows. 
Let $R$ be a rectangle in the candidate set, and $|R|$ denote the number of points that lie inside $R$.   
Furthermore, let $n = |P|$ be the number of points in $\III$.
For the instance $\III'$ we take the candidate rectangles~$\mathcal{R}$, and assign to each rectangle $R\in \mathcal{R}$ the weight $2n|R|-1$.
We say that $\III'$ is the corresponding \textsc{WIS} instance of $\III$.

Let $S' \subset \mathcal{R}$ be a valid independent set, i.e., the set $S'$ consists of pairwise non-overlapping rectangles.
Then, in the instance $\III'$, the weight of the set $S'$ is $W(S') = \sum_{R \in S'} (2n|R|-1)$.
Let $P_{S'}$ be the point set consisting of points in $P$ that are covered by rectangles in $S'$.
Since all rectangles in $S'$ are disjoint, we can rewrite the above equality to get $W(S') = 2n|P_{S'}|-|S'|$.

We can now prove that, for this choice of weights, a maximum-weight independent set in $\III'$ corresponds to a
set of disjoint rectangles in $\III$ which (1) maximizes the number of covered points, and then among all such sets (2) is of minimum size.

\begin{lemma}
\label{lem:wmis}
Let $S,S'\subseteq \mathcal{R}$ be such that $S$ covers more points than $S'$, i.e., $|P_S| > |P_{S'}|$. Then $S$ has a larger weight than $S'$ in the \textup{\textsc{WIS}} instance, i.e., $W(S) > W(S')$.
\end{lemma}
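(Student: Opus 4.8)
The plan is to work directly with the closed-form expression $W(S)=2n|P_S|-|S|$ that was already derived above for any independent set $S\subseteq\mathcal{R}$ of pairwise-disjoint candidate rectangles, and likewise $W(S')=2n|P_{S'}|-|S'|$. Subtracting these,
\[
W(S)-W(S') \;=\; 2n\bigl(|P_S|-|P_{S'}|\bigr)\;-\;\bigl(|S|-|S'|\bigr),
\]
so it suffices to show that the first term strictly dominates the magnitude of the second. The whole lemma is really the observation that the per-rectangle penalty of $-1$ is calibrated against the per-point bonus of $2n$ in exactly the right way.

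First I would handle the first term: since $|P_S|>|P_{S'}|$ and both are nonnegative integers, we have $|P_S|-|P_{S'}|\ge 1$, hence $2n(|P_S|-|P_{S'}|)\ge 2n$. (In particular $|P_S|\ge 1$, so $P\neq\emptyset$ and $n\ge 1$.) For the second term I would use two crude bounds: $|S|-|S'|\le|S|$ because $|S'|\ge 0$; and $|S|\le|P_S|$ because every rectangle in $\mathcal{R}$ contains at least one point of $P$ — the base rectangle $R_{\text{base}}$ contains $p_i$ and $p_j$, each $\mathcal{R}_i$-rectangle contains $p_i$, and the extension and stretching steps in the candidate construction only ever enlarge the covered point set — and the rectangles of the independent set $S$ are pairwise disjoint, so their covered point sets are disjoint nonempty subsets of $P_S$. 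Combined with $|P_S|\le|P|=n$, this yields $|S|-|S'|\le|P_S|\le n$.

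Putting the two estimates together,
\[
W(S)-W(S') \;\ge\; 2n - n \;=\; n \;\ge\; 1 \;>\; 0,
\]
which is exactly the claim. There is essentially no obstacle here beyond bookkeeping; the one step that deserves (minor) care is the bound $|S|\le|P_S|$, i.e., the fact that the candidate set contains no empty rectangles, which I would justify by a one-line appeal to the construction of $\mathcal{R}_{i,j}$ and $\mathcal{R}_i$. Intuitively, any disjoint family of candidates has at most $n$ members, so the total $-1$ penalty is at most $n<2n$, strictly less than the reward $2n$ for a single extra covered point; hence coverage is the strictly dominant term and the lexicographic objective (maximize coverage, then minimize the number of rectangles) is faithfully encoded by the weights.
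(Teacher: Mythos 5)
Your proof is correct and follows essentially the same route as the paper's: both arguments rest on the integer gap $|P_S|-|P_{S'}|\ge 1$ together with the bound $|S|\le n$ obtained from disjointness and the fact that every candidate rectangle covers at least one point, and then conclude by elementary arithmetic (the paper phrases it as the chain $W(S)>2n(|P_S|-1)\ge 2n|P_{S'}|\ge W(S')$ rather than bounding the difference). Your explicit justification that $\mathcal{R}$ contains no empty rectangles is a welcome extra detail that the paper leaves implicit in its pigeonhole step.
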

\begin{proof}
Since $S$ and $S'$ both consist of pairwise non-overlapping rectangles, each point in $P$ can be covered by at most one rectangle in $S$ and also in $S'$.
Then by the pigeon hole principle, $\max(|S|,|S'|) \leq n$.
Thus:
\begin{align*}
    W(S) = 2n|P_S| -|S| &> 2n(|P_S|-1) \\
    &\geq 2n|P_{S'}| \geq 2n|P_{S'}| - |S'| =W(S'). \qedhere
\end{align*}

\end{proof}

\begin{theorem}
\label{thm:wmis_model_1}
Let $\III$ be an instance of \RPFA, and $\III'$ be the corresponding \textsc{\textup{WIS}} instance.
Then an optimal solution~$S^*$ of $\III$ corresponds to a maximum weight independent set in $\III'$. 
\end{theorem}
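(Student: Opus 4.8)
The plan is to leverage Lemma~\ref{lem:wmis} together with the explicit form $W(T) = 2n|P_T| - |T|$ of the weight of any independent set $T \subseteq \mathcal{R}$ (valid because the rectangles of $T$ are pairwise disjoint, so each covered point is counted exactly once), and the bound $|T| \le n$ from the pigeonhole argument used in the proof of Lemma~\ref{lem:wmis}. Write $m$ for the largest number of points of $P$ that can simultaneously be covered by a set of pairwise-disjoint rectangles from $\mathcal{R}$; since $\mathcal{R}$ consists of all viable rectangles of $\III$, $m$ equals the number of points covered by any viable \RPFA\ solution, and in particular by the optimal solution $S^*$.

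First I would show that $S^*$ is a maximum-weight independent set in $\III'$. Let $T \subseteq \mathcal{R}$ be an arbitrary independent set. If $|P_T| < m = |P_{S^*}|$, then Lemma~\ref{lem:wmis} directly gives $W(T) < W(S^*)$. Otherwise $|P_T| = m$ (it cannot exceed $m$ by definition of $m$), so $T$ covers a maximum-cardinality subset of $P$ and is thus a viable \RPFA\ solution; since $S^*$ has minimum cardinality among all viable solutions, $|S^*| \le |T|$, and hence $W(S^*) = 2nm - |S^*| \ge 2nm - |T| = W(T)$. In both cases $W(T) \le W(S^*)$, so $S^*$ attains the maximum weight.

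Conversely, I would argue that every maximum-weight independent set $S'$ in $\III'$ is itself an optimal \RPFA\ solution, which establishes the claimed correspondence. Since $W(S') \ge W(S^*)$ and, by Lemma~\ref{lem:wmis}, covering fewer than $m$ points forces a strictly smaller weight (while covering more than $m$ points is impossible), $S'$ must cover exactly $m$ points; in particular $S'$ is viable. From $2nm - |S'| = W(S') \ge W(S^*) = 2nm - |S^*|$ we then get $|S'| \le |S^*|$, and minimality of $|S^*|$ forces $|S'| = |S^*|$, so $S'$ is optimal.

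Once Lemma~\ref{lem:wmis} is available the argument is just a short case distinction, so there is no serious obstacle; the only point requiring a moment's care is the equivalence between ``$T$ covers $m$ points, the combinatorial maximum over disjoint subsets of $\mathcal{R}$'' and ``$T$ is a viable \RPFA\ solution'' — i.e., that restricting to the candidate set $\mathcal{R}$ of all viable rectangles does not change which point-coverage is achievable — since this is exactly what licenses the comparison $|S^*| \le |T|$ in the second case.
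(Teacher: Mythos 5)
Your proof is correct and follows essentially the same route as the paper's: both rest on Lemma~\ref{lem:wmis} and the identity $W(T)=2n|P_T|-|T|$ for an independent set $T$, and both reduce to comparing cardinalities once the coverage is maximal. The only differences are cosmetic --- you avoid the paper's case split on $f=0$ versus $f>0$ by working directly with the maximum coverable number $m$, and your converse direction is what the paper states separately as Lemma~\ref{thm:wmis_model_2}; your closing caveat about $\mathcal{R}$ containing all viable rectangles is exactly the (implicit) assumption the paper also relies on.
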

\begin{proof}
We make a case distinction on or not whether $\III$ has a non-zero minimum font size~$f$.

\textbf{Minimum font size~$f=0$:}
In this case, $S^*$ must cover all points in $P$. 
That means $P_{S^*} = P$, and implies $W(S^*) = 2n|P|-|S^*| = 2n^2-|S^*|$.
Now consider the set $S$ which corresponds to $S^*$, First, observe that $S^*$ corresponds to an independent set by definition, and hence we only have to argue that $S^*$ corresponds to an independent set that maximizes the weight. 
Let $S$ be an independent set in $\III'$. 
If $S$ does not cover all points, then by Lemma~\ref{lem:wmis} the weight of $S$ cannot be larger than $W(S^*)$. 
Thus assume that $S$ covers all points in $P$, and assume for contradiction that $W(S)>W(S^*)$. We get
$$2n^2-|S| = W(S) > W(S^*) = 2n^2-|S^*|.$$ Hence it must hold that $|S| < |S^*|$. This contradicts the fact that $S^*$ is an optimal solution for $\III$, since $S$ also covers all points, is independent, and consists of fewer rectangles.

\textbf{Minimum font size~$f>0$:}
In this case, an optimal solution~$S^*$ need not cover all points in $P$. However, $S^*$ covers as many points as possible.
That is, for any independent set $S$ in $\mathcal{R}$, we have $|P_S| \leq |P_S^*|$.
If $S$ covers fewer points than $S^*$, then by Lemma~\ref{lem:wmis}, the weight of $S$ is smaller than the weight of~$S^*$.
Since $S^*$ is an independent set by definition, there must be an independent set $S$ that covers at least as many points as~$S^*$. We can therefore repeat the argument in the previous case, showing that there cannot be a solution $S$ for $\III'$ that has higher weight than $S^*$, and thus fewer rectangles.
\end{proof}

\begin{lemma}
\label{thm:wmis_model_2}
Let $\III$ be an instance of \RPFA, and $\III'$ be the corresponding \textup{\textsc{WIS}} instance.
Then each maximum weight independent set in instance $\III'$ is an optimal solution~$S^*$ in $\III$.
\end{lemma}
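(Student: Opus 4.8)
The plan is to leverage Theorem~\ref{thm:wmis_model_1} together with Lemma~\ref{lem:wmis}, and show that any maximum weight independent set $S$ in $\III'$ must agree with an optimal solution $S^*$ of $\III$ in the two quantities that matter: the number of covered points and the number of rectangles used. First I would fix a maximum weight independent set $S$ in $\III'$ and an optimal solution $S^*$ of $\III$. By Theorem~\ref{thm:wmis_model_1}, $S^*$ corresponds to a maximum weight independent set, so both $S$ and $S^*$ attain the maximum weight and hence $W(S)=W(S^*)$. I would also record the bookkeeping observation that, since $\mathcal{R}$ is taken to be the set of \emph{all} viable rectangles of $\III$, an independent set $S\subseteq\mathcal{R}$ is automatically a set of pairwise disjoint rectangles each of which satisfies every \RPFA\ constraint; thus $S$ is a legitimate candidate \RPFA\ solution and all that is left to verify is maximality of coverage and minimality of cardinality.

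Next I would pin down $|P_S|$ by a two-sided argument. Since $S^*$ is optimal, it is in particular viable, so it covers a maximum-cardinality subset of $P$; therefore $|P_S|\le|P_{S^*}|$. Conversely, if $|P_S|<|P_{S^*}|$, then Lemma~\ref{lem:wmis} would yield $W(S)<W(S^*)$, contradicting $W(S)=W(S^*)$. Hence $|P_S|=|P_{S^*}|$, so $S$ also covers a maximum-cardinality subset of $P$ and is therefore viable.

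Finally, using the identity $W(T)=2n|P_T|-|T|$ valid for every independent set $T$ (established in the text preceding Lemma~\ref{lem:wmis}), from $W(S)=W(S^*)$ and $|P_S|=|P_{S^*}|$ I would deduce $|S|=|S^*|$. Thus $S$ is a viable set of rectangles of minimum cardinality, i.e., an optimal solution of $\III$, completing the proof. I expect the only subtlety is getting the equality $|P_S|=|P_{S^*}|$ cleanly: one inequality must be extracted from the maximum-coverage requirement baked into the definition of a viable (hence optimal) solution, while the reverse inequality is exactly what Lemma~\ref{lem:wmis} rules out; once both are in hand, the remaining cardinality comparison is a one-line consequence of the weight identity.
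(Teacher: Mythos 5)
Your proof is correct. Both your argument and the paper's hinge on the same two facts: Lemma~\ref{lem:wmis} and the identity $W(T)=2n|P_T|-|T|$ for any independent set $T$ of candidate rectangles. The decomposition differs slightly, though. The paper reasons directly about the maximum-weight independent set $S$ without ever naming an optimal solution: Lemma~\ref{lem:wmis} forces $S$ to cover the maximum possible number of points, and the weight identity shows that an independent set with the same coverage but fewer rectangles would have strictly larger weight, a contradiction. You instead fix an optimal solution $S^*$ and route through Theorem~\ref{thm:wmis_model_1} to obtain $W(S)=W(S^*)$, from which you extract $|P_S|=|P_{S^*}|$ (one inequality from the viability of $S^*$, the other from Lemma~\ref{lem:wmis}) and then $|S|=|S^*|$ from the weight identity. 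This is a legitimate alternative, since Theorem~\ref{thm:wmis_model_1} is proved independently of the present lemma, so there is no circularity; the paper's version has the minor advantage of being self-contained modulo Lemma~\ref{lem:wmis}, while yours makes the relationship between the two directions of the correspondence explicit. Your added remark that an independent set $S\subseteq\mathcal{R}$ is automatically a pairwise-disjoint, constraint-satisfying candidate \RPFA\ solution --- which is what licenses the inequality $|P_S|\le|P_{S^*}|$ from the maximum-coverage clause in the definition of viability --- addresses a point the paper leaves implicit, and is a welcome clarification.
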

\begin{proof}
Consider a maximum independent set $S$ in the instance $\III'$. 
By Lemma~\ref{lem:wmis}, we know that $S$ must cover as many points as possible.
Furthermore, there cannot be an independent set $S'$ in $\III'$, that covers as many points as $S$ ($|P_S|=|P_{S'}|$), but has fewer rectangles ($|S|>|S'|$). For such an independent set $S'$ the inequality
$W(S') = 2n|P_{S'}| - |S| > 2n|P_S| - |S| = W(S)$ would hold, contradicting the optimality of $S$. 
\end{proof}


\subsection{Exact Solver Based on \textsc{MaxSAT}}
\label{subsec: exact}
Recently, Klute et al.~\cite{kllns-esl-19} proposed an exact \textsc{MaxSAT} model for the point feature labeling problem. 
They achieved reasonable running time for real-world cartographic applications.
Using a similar idea, we propose an exact approach based on a \textsf{SAT} model, modeling weighted independent set as a set of clauses and Boolean variables. 

\mypar{Satisfiability}
A clause $c$ is a disjunction of a set of Boolean variables, where each variable~$v$ appears either as a positive literal~$v$ or a negative literal~$\neg v$ in $c$.
Given a variable set~$V$, a truth assignment $\alpha: V\rightarrow \{\texttt{true}, \texttt{false}\}$ maps a truth value to each variable in $V$.  Given a truth assignment, one evaluates a clause $c$ as true if and only if at least one literal in $c$ is true.
A \emph{weighted} \textsf{SAT} formula $\phi$ in conjunctive normal form is a conjunction of a set of clauses $\phi = c_1 \land \dots \land c_m$, where each clause $c_i$ is assigned a positive weight $w(c_i)$. Given a weighted \textsf{SAT} formula $\phi$, the weighted maximum satisfiability problem (Weighted \textsc{MaxSAT}) asks for a truth assignment $\alpha$ that maximizes the total weight of the satisfied clauses in~$\phi$.


We now encode an instance $\III$ of the \textsc{WIS} problem, consisting of the set $\mathcal{R}$ of weighted candidate rectangles,
in a weighted \textsf{SAT} formula $\phi_\III$ as follows. For each rectangle~$R\in\mathcal{R}$, we introduce a variable $v_R$.
We then construct two types of clauses. 
For each rectangle~$R$, we introduce a rectangle clause $c_R = v_R$ consisting of the single positive literal~$v_R$. 
The weight of $c_R$ is the weight of the corresponding rectangle $R$, namely $w(c_R) = 2n|R|-1$. 
For each intersection between two rectangles $R$ and $R'$, we introduce an intersection clause $c_{R,R'} = \neg v_R \lor \neg v_{R'}$. 
Each intersection clause $c_{R,R'}$ has weight $w(c_{R,R'}) = 2|\mathcal{R}|n^2$.

Given a truth assignment $\alpha$, the set of true variables corresponds to a set $S_{\alpha}\subseteq \mathcal{R}$ of rectangles. 
Conversely, given a set $S\subseteq\mathcal{R}$, the corresponding truth assignment $\alpha_S$ assigns \texttt{true} to a variable~$v_R$ if and only if $R\in S$. 

\begin{restatable}{lemma}{hardCLAUSE}
\label{lem:hard-clause-independent}
A set $S\subseteq \mathcal{R}$ is an independent set if and only if its corresponding truth assignment $\alpha_S$ satisfies all intersection clauses. 
\end{restatable}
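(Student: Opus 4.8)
The plan is to prove both directions of the biconditional by unpacking the definition of an intersection clause. First I would establish the forward direction: assume $S \subseteq \mathcal{R}$ is an independent set, meaning its rectangles are pairwise disjoint. Take any intersection clause $c_{R,R'} = \neg v_R \lor \neg v_{R'}$; such a clause exists precisely because $R$ and $R'$ intersect. Since $S$ is independent, $R$ and $R'$ cannot both belong to $S$, so at least one of $v_R, v_{R'}$ is assigned \texttt{false} by $\alpha_S$, which means the corresponding negative literal is true and the clause is satisfied. As this holds for every intersection clause, $\alpha_S$ satisfies all of them.

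For the reverse direction, I would argue by contrapositive (or directly): suppose $S$ is \emph{not} an independent set. Then there exist two distinct rectangles $R, R' \in S$ that intersect. By construction, the intersection clause $c_{R,R'} = \neg v_R \lor \neg v_{R'}$ is present in $\phi_\III$. Since both $R$ and $R'$ lie in $S$, the assignment $\alpha_S$ sets both $v_R$ and $v_{R'}$ to \texttt{true}, hence both literals $\neg v_R$ and $\neg v_{R'}$ evaluate to \texttt{false}, so the clause $c_{R,R'}$ is not satisfied. Thus $\alpha_S$ fails to satisfy all intersection clauses, completing the contrapositive.

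The two directions together give the claimed equivalence. There is essentially no technical obstacle here — the lemma is a direct translation between the combinatorial notion of disjointness and the logical notion of clause satisfaction, and the only things to be careful about are (i) noting the one-to-one correspondence between intersecting pairs of rectangles and intersection clauses, so that "$S$ independent" is literally the statement that no intersection clause has both its underlying rectangles in $S$, and (ii) correctly tracking that a negative literal $\neg v_R$ is true exactly when $R \notin S$ under $\alpha_S$. I expect the write-up to be short; the main (minor) point worth stating explicitly is that the bijection between pairs of intersecting rectangles and intersection clauses is what makes the argument an exact equivalence rather than just an implication.
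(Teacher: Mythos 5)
Your proposal is correct and follows essentially the same argument as the paper, which simply chains the equivalences ``$R,R'\in S$ simultaneously iff $\alpha_S$ sets both $v_R$ and $v_{R'}$ to \texttt{true} iff $c_{R,R'}$ evaluates to \texttt{false}'' in one step rather than splitting into two directions. No gaps; the unpacking into forward and contrapositive directions is just a more verbose presentation of the same observation.
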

\begin{proof}
For every pair of intersecting rectangles $R$ and $R'$ in the \textsc{WIS} instance $\III$, we introduced an intersection clause $c_{R,R'} = \neg v_R \lor \neg v_{R'}$ in $\phi_\III$. 
Given a set $S\subseteq\mathcal{R}$ and its 
truth assignment $\alpha_S$, $S$ contains $R$ and $R'$ simultaneously if and only if $\alpha_S$ sets both $v_R$ and $v_{R'}$ to \texttt{true}, i.e., if and only if clause $c_{R,R'}$ evaluates to \texttt{false}.
\end{proof}

Given a truth assignment $\alpha_\text{max}$ that maximizes the total weight of satisfied clauses in $\phi_\III$, we get the following statements.
\begin{restatable}{lemma}{hardCLAUSEmax}
\label{lem:hard-clause-max}
Every intersection clause is satisfied in $\alpha_\text{max}$. 
\end{restatable}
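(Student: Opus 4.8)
The plan is to argue by contradiction: suppose some intersection clause $c_{R,R'}$ is falsified under $\alpha_{\text{max}}$, meaning both $v_R$ and $v_{R'}$ are set to \texttt{true}. I would then exhibit a different truth assignment with strictly larger total satisfied weight, contradicting the maximality of $\alpha_{\text{max}}$. The natural candidate is the assignment $\alpha'$ obtained from $\alpha_{\text{max}}$ by flipping $v_R$ (or $v_{R'}$) to \texttt{false} and leaving everything else unchanged.

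The key step is a careful accounting of how the total satisfied weight changes under this flip. On the one hand, flipping $v_R$ to \texttt{false} can only \emph{gain} weight from intersection clauses: every intersection clause containing $v_R$ becomes satisfied (it now has a true negative literal), and no other intersection clause is affected. Since $c_{R,R'}$ was previously falsified, we gain at least its weight $2|\mathcal{R}|n^2$. On the other hand, the only satisfied clause we can \emph{lose} is the single rectangle clause $c_R = v_R$, whose weight is $2n|R|-1 < 2n\cdot n \le 2|\mathcal{R}|n^2$ (using $|R|\le n$ and $|\mathcal{R}|\ge 1$). Hence the net change in satisfied weight is at least $2|\mathcal{R}|n^2 - (2n|R|-1) > 0$, so $\alpha'$ strictly improves on $\alpha_{\text{max}}$ — the desired contradiction.

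The main obstacle, such as it is, is simply making sure the weight comparison is airtight: one must check that the single rectangle clause $c_R$ is the only clause that can switch from satisfied to unsatisfied when $v_R$ flips (true, since $v_R$ appears positively only in $c_R$ and negatively only in intersection clauses), and that the chosen intersection-clause weight $2|\mathcal{R}|n^2$ genuinely dominates the largest possible rectangle weight $2n|P|-1$. Both are routine given the weight definitions, so the lemma follows.
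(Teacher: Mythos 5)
Your proof is correct, and the weight comparison at its core (a single intersection clause of weight $2|\mathcal{R}|n^2$ dominates any rectangle clause of weight $2n|R|-1 < 2n^2$) is exactly the inequality the paper relies on. The route to the contradiction differs, though: you use a \emph{local exchange} argument, flipping one endpoint $v_R$ of a violated intersection clause to \texttt{false} and checking that the only clause that can be lost is $c_R$ itself (since $v_R$ occurs positively only in $c_R$ and negatively only in intersection clauses), so the net weight strictly increases. The paper instead makes a \emph{global} comparison: it bounds the total weight of \emph{all} rectangle clauses by $2|\mathcal{R}|n^2$, observes that the all-\texttt{false} assignment satisfies every intersection clause, and concludes that any assignment missing even one intersection clause is already beaten by the all-\texttt{false} assignment regardless of which rectangle clauses it satisfies. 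Your local argument yields the slightly stronger statement that any assignment violating an intersection clause admits a one-variable improvement, and it would survive a weaker choice of intersection-clause weight (anything exceeding the single largest rectangle weight $2n^2-1$, rather than the sum of all of them); the paper's version is marginally shorter because it avoids tracking which clauses change status under a flip. Both are airtight.
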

\begin{proof}
Given a rectangle $R$, its rectangle clause $c_R$ has weight $2n|R|-1 < 2n^2$.
Thus the total weight of all rectangle clauses is bounded by the weight $2|\mathcal{R}|n^2$ of a single intersection clause. 
Consider the special truth assignment $\alpha_{\texttt{false}}$ that assigns \texttt{false} to each variable.
Using this special assignment $\alpha_{\texttt{false}}$, each intersection clause is satisfied. 
Now suppose, for contradiction, that there exists an intersection clause that evaluates to \texttt{false} by $\alpha_{\text{max}}$. 
Then the total weight of true clauses in the assignment $\alpha_{\text{max}}$ is less than the total weight of $\alpha_{\texttt{false}}$, even in the case that $\alpha_{\text{max}}$ satisfies all rectangle clauses. This contradicts the assumption that $\alpha_{\text{max}}$ maximizes the total weight.
\end{proof}

\begin{theorem}
Let $\alpha$ be a truth assignment of $\phi_\III$ and $S_{\alpha}$ be the rectangle set corresponding to all variables that are assigned \textup{\texttt{true}} in $\alpha$.
Assignment~$\alpha$ maximizes the total weight of satisfied clauses if and only if the set~$S_{\alpha}$ is a maximum weight independent set in $\mathcal{R}$. 
\end{theorem}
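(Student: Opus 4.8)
The plan is to prove the biconditional by combining the two lemmas just established with the correspondence results from Section~\ref{subse:wmis_model}. The key observation is that Lemma~\ref{lem:hard-clause-max} already pins down the structure of any weight-maximizing assignment: every intersection clause must be satisfied. By Lemma~\ref{lem:hard-clause-independent}, this means the rectangle set $S_\alpha$ picked out by a maximizing assignment is always an independent set in $\mathcal{R}$. So both directions reduce to comparing total satisfied-clause weight against total independent-set weight $W(\cdot)$, restricted to the ``good'' regime where all intersection clauses hold.

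First I would record the bookkeeping identity: for any truth assignment $\alpha$ whose true-set $S_\alpha$ is independent (equivalently, all intersection clauses are satisfied), the total weight of satisfied clauses equals a fixed constant --- namely the sum of weights of all intersection clauses, $\binom{|\mathcal R|}{2}\cdot 2|\mathcal R|n^2$ or more precisely the number of intersecting pairs times $2|\mathcal R|n^2$ --- plus $\sum_{R\in S_\alpha}(2n|R|-1) = W(S_\alpha)$. The point is that the intersection-clause contribution is the same for all such assignments, so maximizing total satisfied weight over this regime is exactly maximizing $W(S_\alpha)$. I would note also that, by Lemma~\ref{lem:hard-clause-max}, we may freely assume any candidate-maximizer lies in this regime.

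For the forward direction, suppose $\alpha$ maximizes total satisfied weight. By Lemma~\ref{lem:hard-clause-max} all intersection clauses are satisfied, so by Lemma~\ref{lem:hard-clause-independent} $S_\alpha$ is independent, and by the identity above its total satisfied weight is $C + W(S_\alpha)$ where $C$ is the fixed intersection-clause total. If some independent set $S'$ had $W(S') > W(S_\alpha)$, then $\alpha_{S'}$ would satisfy all intersection clauses (Lemma~\ref{lem:hard-clause-independent}) and achieve satisfied weight $C + W(S') > C + W(S_\alpha)$, contradicting maximality of $\alpha$. Hence $S_\alpha$ is a maximum-weight independent set. For the converse, suppose $S_\alpha$ is a maximum-weight independent set. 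By Lemma~\ref{lem:hard-clause-independent} $\alpha = \alpha_{S_\alpha}$ satisfies all intersection clauses, so its satisfied weight is $C + W(S_\alpha)$. Any assignment $\beta$ satisfies weight at most $C + W(S_\beta)$ when $S_\beta$ is independent (with equality), and strictly less than $C$ (hence less than $C + W(S_\alpha)$, since all weights are positive) when some intersection clause fails --- again by the argument inside Lemma~\ref{lem:hard-clause-max}. Since $W(S_\beta)\le W(S_\alpha)$ by maximality, $\beta$ cannot beat $\alpha$, so $\alpha$ is weight-maximizing.

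The main obstacle, such as it is, is purely bookkeeping: making sure the ``fixed constant'' $C$ genuinely is independent of the assignment (it is, because in the good regime every intersection clause is satisfied regardless of which variables are true) and handling the degenerate-but-harmless case where the maximizer might not obviously be in the good regime --- but Lemma~\ref{lem:hard-clause-max} disposes of exactly that. I would also mention for completeness that composing this theorem with Theorem~\ref{thm:wmis_model_1} and Lemma~\ref{thm:wmis_model_2} yields that the \textsc{MaxSAT} model solves \RPFA\ exactly, which is the practically relevant consequence.
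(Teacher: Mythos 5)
Your proof is correct and follows essentially the same route as the paper's: both directions rest on Lemma~\ref{lem:hard-clause-independent} and Lemma~\ref{lem:hard-clause-max}, together with the observation that once all intersection clauses are satisfied the satisfied-clause weight differs from $W(S_\alpha)$ only by a constant. Your version merely makes explicit the bookkeeping (the fixed intersection-clause total $C$) that the paper leaves implicit.
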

\begin{proof}
Suppose $S_{\alpha}$ is a maximum-weight independent set, then by Lemma~\ref{lem:hard-clause-independent} its corresponding assignment $\alpha$ must satisfy all intersection clauses. By optimality of $S_{\alpha}$, the total weight of satisfied rectangle clauses is also maximized by $\alpha$, and hence $\alpha$ maximizes the overall weight of satisfies clauses.

Conversely, assume that $\alpha$ is an assignment that maximizes the weight of all satisfied clauses. The corresponding rectangle set $S_\alpha$ is an independent set, by Lemmas~\ref{lem:hard-clause-independent} and~\ref{lem:hard-clause-max}.  
Furthermore, since $\alpha$ satisfies all intersection clauses and maximizes the total weight of satisfied rectangle clauses, $S_\alpha$ has the maximum weight among all independent sets.
\end{proof}

After we encode our problem as a weighted \textsc{MaxSAT} problem, we solve it using the open-source \textsc{MaxSAT} solver MaxHS~3.0~\cite{maxhs}.


\subsection{Greedy Heuristic}
\label{subsec: greedy}
Since the running time of the exact \textsc{MaxSAT} solver is infeasible for larger instances (see Section~\ref{sec:evaluation}), we propose a simple heuristic to find solutions for \textsc{WIS} more quickly. We first assign weights to each rectangle in $\mathcal{R}$, as explained in Section~\ref{subse:wmis_model}, to ensure that an optimal solution for \textsc{WIS} corresponds to an optimal solution for \RPFA. The heuristic starts by choosing the rectangle in $\mathcal{R}$ that has the highest weight as the initial solution~$S$. We then consider the remaining rectangles in order of descending weight, and greedily try to add them to $S$, one by one. A rectangle~$R\in\mathcal{R}$ can either overlap with a rectangle in~$S$, or it is completely disjoint from all rectangles in~$S$. In the former case, we discard~$R$, while in the latter case, $R$ is added to $S$. We proceed until all input points~$P$ are covered, or otherwise until $\mathcal{R}$ has been exhausted. Recall that not all rectangles covering single points may be present due to the constraints, and even when they are, we may have chosen rectangles in $S$ that overlap with these singleton rectangles. Hence there are inputs for which we exhaust $\mathcal{R}$ before all points are covered.

\subsection{Comparison of MaxSAT and Greedy}
\begin{figure}
    \centering
    \includegraphics{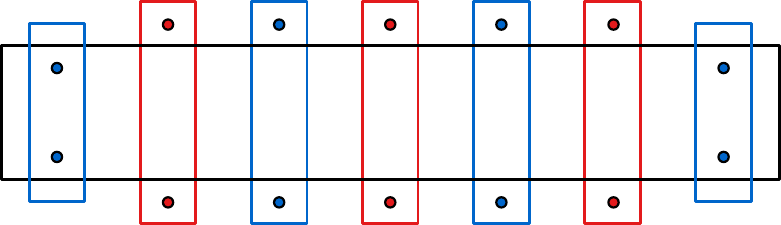}
    \caption{A point set with corresponding candidate set, consisting of $7$ vertically aligned candidate rectangles (blue/red), $1$ horizontally rectangle (black), and singletons (not drawn). }
    \label{fig: simple instance}
\end{figure}
In the following, we give a comparison of these two approaches with a simple example.
We presented an exact solver based on \textsc{MaxSAT} (see Section~\ref{subsec: exact}) and a greedy heuristic (see Section~\ref{subsec: greedy}) for the \RPFA~problem.
Both approaches take the same set of candidate rectangles as input and solve the \textsc{WIS} problem, either exactly or heuristically.
We illustrate these two procedures and their differences with an example instance; see Figure~\ref{fig: simple instance}.

In this example, the candidate rectangle set $\mathcal R$ consists of seven vertically stretched rectangles, one horizontal rectangle, and singletons for each point.
Recall that we assign to each candidate rectangle  $R \in \mathcal{R}$ the weight $2n|R|-1$, where $n= |P|$ and $|R|$ is the number of points covered by $R$.
With this weight assignment, the maximum-weight solution $S^*$ consists of the seven vertically aligned rectangles. 
Our exact solver method encodes these weights and pairwise overlaps of candidate rectangles as a weighted \textsc{MaxSAT} problem and then solves this problem exactly, thus finding the solution $S^*$.

Our greedy heuristic sorts the candidate rectangles in $\mathcal{R}$ by their weights in decreasing order. 
Then a set $S$ of pairwise disjoint rectangles is built greedily by choosing the rectangles in this order.
Precisely, this heuristic starts with the highest-weight rectangle $R_1$ and adds it to $S$. 
Then it checks the highest-weight rectangle $R_2$ in $\mathcal{R} \setminus \{R_1\}$. 
If this rectangle $R_2$ is entirely disjoint from all rectangles in $S$, we add it greedily to $S$; otherwise it is discarded.
This process is repeated until all points are covered by $S$ or the candidate set $\mathcal{R}$ is exhausted.

The horizontal rectangle $H$, which covers $4$ blue points, has the highest weight. 
Thus, the greedy approach collects $H$ as the first rectangle of the greedy solution $S$. 
All vertical candidate rectangles intersect $H$.
Thus, after picking $H$,  an uncovered point can only be covered by its singleton.
Hence, the greedy solution $S$ consists of $H$ and ten singletons for the points outside $H$.
Extending this simple construction by adding more vertical rectangles with appropriately colored points, the greedy approach could reach an approximation ratio arbitrarily close to  $2$.
However, we did not observe such behavior in our experiments.

\section{Experimental evaluation}\label{sec:evaluation}
We implemented the algorithms in Section~\ref{sec:algorithms} and first compared them against each other in terms of scalability as well as solution quality on synthetic data.
In the second part of the evaluation, we provide a case study in which we showcase how Worbels can be used on a real-world data set.

\subsection{Experimental Setup}
\label{sub:expsetup}
The experiments for measuring performance were run on a server equipped with two Intel Xeon E5-2640 v4 processors (2.4 GHz 10-core) and 160GB RAM, operating the 64-bit version of Ubuntu Bionic (18.04.2 LTS). The case study was carried out on a standard Windows laptop equipped with an Intel Core i7-6700HQ (2.60GHz, 4 Cores 8 Logical Processors) and 16GB RAM. The code was compiled with g++ 7.4.0 with Optimization level -O3. 

\mypar{Data sets}
We use three types of data for our experiments: two synthetic data sets that we use to analyze the scalability of our algorithms and the quality of their output, and a real-world data set to show the viability of our technique in practice.
The synthetic data sets, Uniform and Gaussian, are randomly generated point sets inside a bounding box of size 1000×1000 pixels, based on a uniform and a Gaussian distribution, respectively. The category labels in these synthetic instances are generated randomly such that each label has $3$ to $10$ letters. We vary the number of points~$n$ (from $20$ to $1000$, with increments of $10$), as well as the number of categories~$c$ ($2$, $3$, $4$, $8$, or $16$ categories).
\begin{description}
\item[Uniform] In the uniform model, we generate $n$ points inside a bounding box $\mathcal{B}$ of size 1000×1000 pixels uniformly at random.
    We generate $k$ random text strings of lowercase letters as category tags, each having a length between $3$ and $10$ characters, chosen uniformly at random.
    Then we assign one of the $k$ categories to each point, again uniformly at random.

\item[Gaussian] Intuitively, the points in a Gaussian instance are drawn from a mixture of $k$ Gaussian distributions.
	We first decide the number of points in each category $\{n_1, \dots, n_k\}$, using a Dirichlet distribution with dimension $k$. Note that the numbers in the categories sum up to $n$.
	For each category $i$, we use a separate Gaussian distribution whose mean~$\mu_i$ is sampled in the bounding box $\mathcal{B}$ uniformly at random.
	Then a value ~$\sigma_i$ is sampled uniformly between $0$ and $0.5$, and is used as the standard deviation for both dimensions.
	We then generate, for each $i=1, \dots, k$, $n_i$ points inside $\mathcal{B}$ using a Gaussian distribution with parameters $\mu_i$ and $\sigma_i$. 
\end{description}

Every five years, the U.S. Department of Agriculture (USDA) publishes a comprehensive summary of the census of agriculture, which includes harvest statistics for major crops in each county.
For the real-world data, US-Crops, we use statistical data about the crop harvest in the United States in the year 2007.\footnote{extracted from \href{https://www.arcgis.com/home/item.html?id=f6aa37a7376b4cbd8cc52abfdf8d63c4}{ArcGIS$\_$US$\_$CROPS$\_$2007}} The data is on county level, resulting in $3067$ data points, with a label from the set $\{\textsf{Corn}, \textsf{Wheat}, \textsf{Cotton}, \textsf{Soybeans}, \textsf{Vegetables}, \textsf{Various}\}$ indicating the dominant crop in the harvest of each county.
Additionally, we filter the counties with at least 100,000 acres harvested, to get a smaller data set with $817$ points, which we call US-Crops-filtered, and a variant of the data set where the aggregation is on state level, US-Crops-state, resulting in 48 data points. The point feature for each county/state uses the latitude and longitude of a central location in each county/state as coordinates. 
\begin{description}
\item[Real-world] 
We generate three data sets from this data base, where we assign the labels representing the dominant crop to each county/state.
\begin{description}
    \item[US-Crops]
    The full data extracted from the data base. 
    Each point feature represents a county in the data. 
    This data set consists of $3067$ counties, partitioned into six categories, $\{\textsf{Corn}, \textsf{Wheat}, \textsf{Cotton}, \textsf{Soybeans}, \textsf{Vegetables}, \textsf{Various}\}$. The label \textsf{Various} is assigned, if there is no data for that county in the census; any crop may be dominant.

    \item[US-Crops-filtered]
    In this middle-size data set, we filter all counties with total area of harvested crops larger than 100,000 acres, resulting in a total of $817$ features with five categories, all previous categories, minus the label \textsf{Various}, since those states will always be filtered out.

    \item[US-Crops-state] In this state-level data set, each point feature represents one state. 
    We compute the dominant crop per state by summing up acres of the counties in a state, and assigning the crop with the highest total harvest as the dominant crop in the state.
    Then, we use the latitude and longitude obtained from \href{https://www.latlong.net/category/states-236-14.html}{Geographic Coordinate of US States} as coordinates for a point feature representing the state.
    This instance has $48$ points and each point is assigned one of five labels, similar to the filtered data set.
\end{description}
\end{description}


\subsection{Experimental Results}
\label{subsec: Eresults}
Here we explain the results from a selection of experiments, 
using the synthetic data sets.

\mypar{Instance parameters vs performance}
We first show that the running time of both the exact solver and the greedy heuristic are greatly affected by the size of the candidate set~$\mathcal{R}$, more so than by the number of points~$n$ or number of categories~$c$. 
In this experiment, both algorithms were executed with parameter settings $\rho_l=0$, $\rho_u=\infty$, and $t=\rho_t=f=0$. 
\begin{figure}[tbp]
	\captionsetup[subfigure]{justification=centering}
	\begin{subfigure}[t]{0.48\columnwidth}
	    \centering
		\includegraphics[scale=0.5]{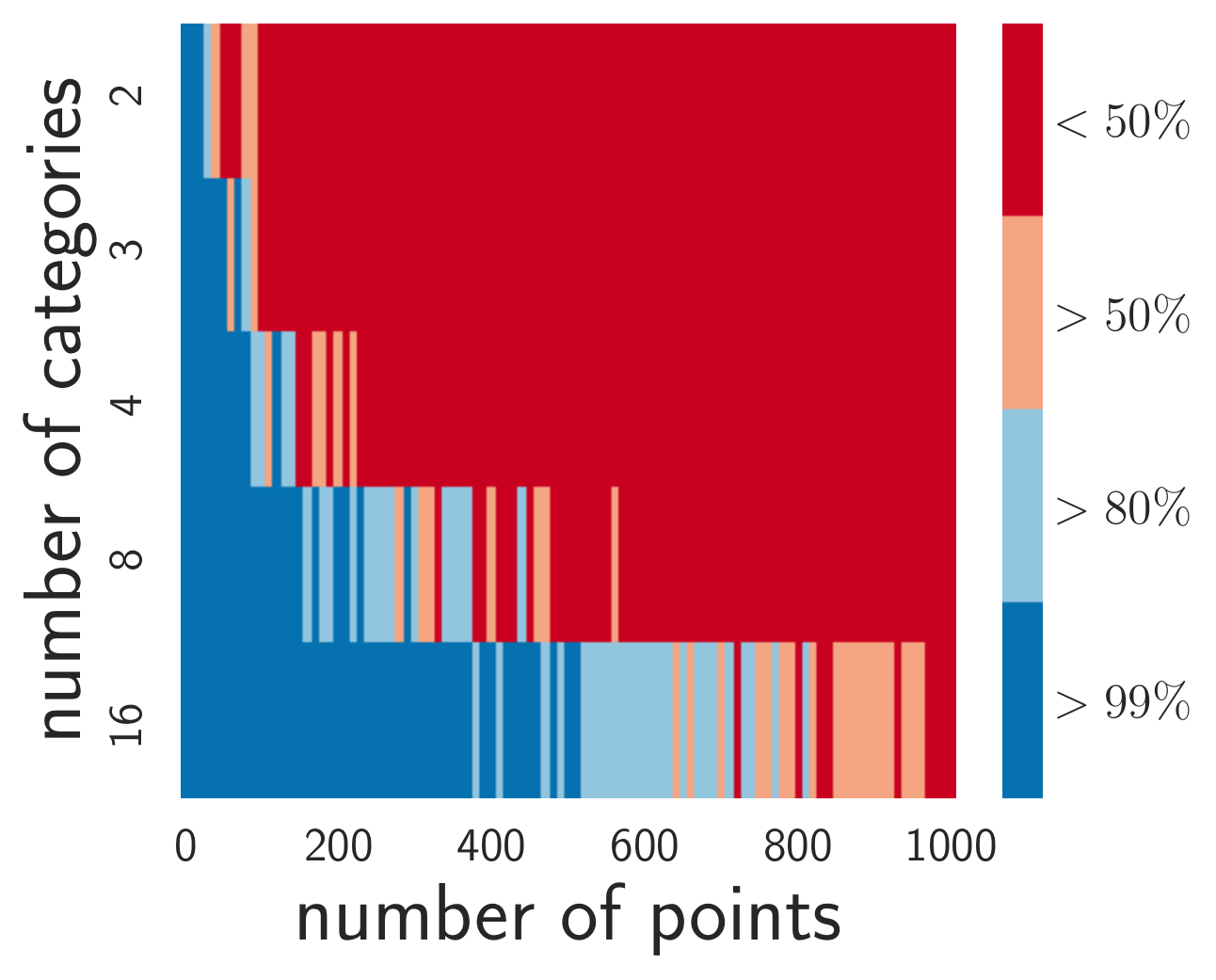}
		\caption{}
	\end{subfigure}
	\hfill
	\begin{subfigure}[t]{0.48\columnwidth}
	    \centering
	    \includegraphics[scale=0.5]{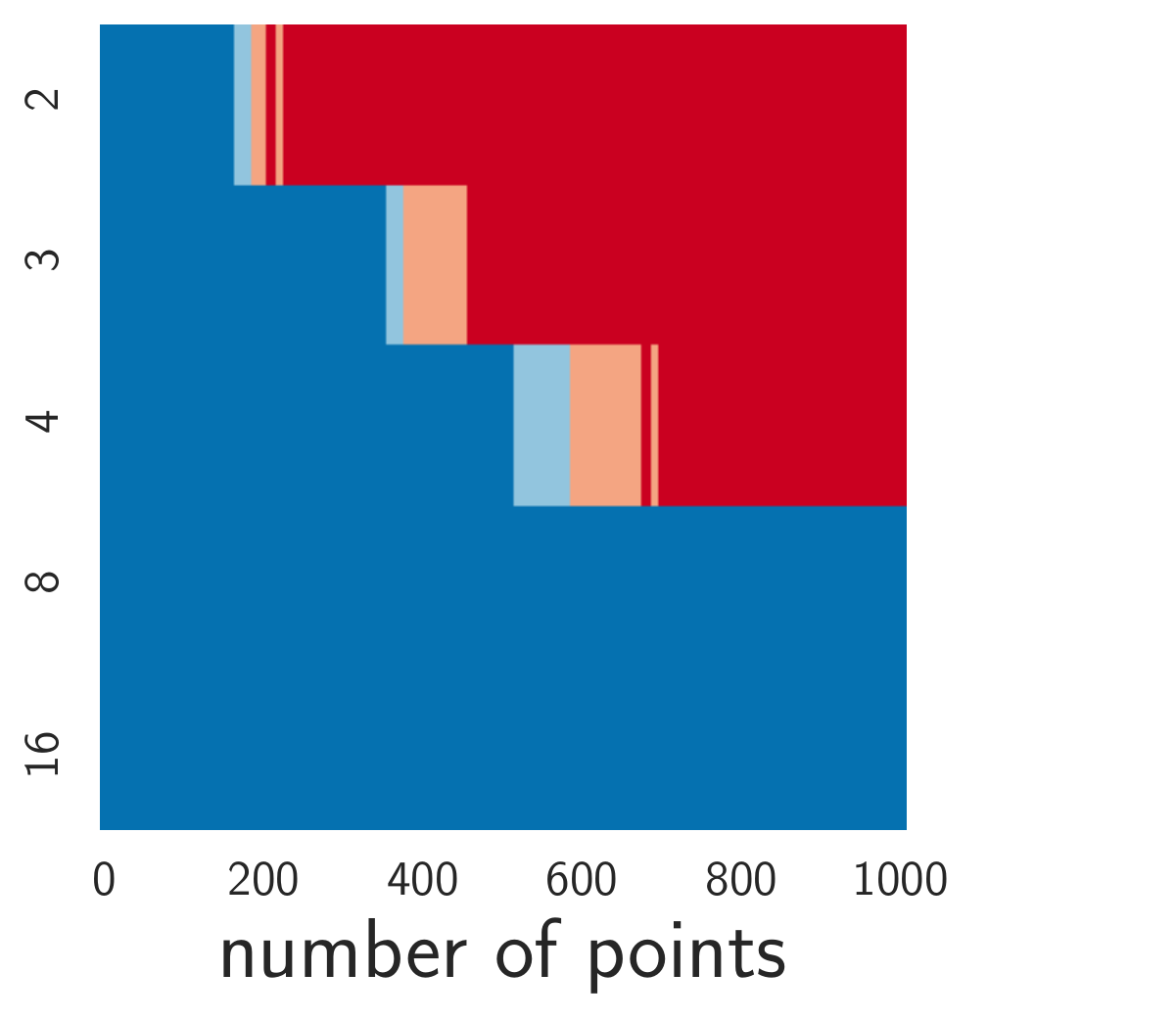}
	    \caption{}
    \end{subfigure}

    \caption{Percentage of instances, for which the exact solver terminates within 30 minutes. \textsf{\textbf{(a)}}~Gaussian, and \textsf{\textbf{(b)}}~Uniform instances with $n= 20$ to $1000$ points, $c \in \{2,3,4,8,16\}$ categories. 
    }
	\label{fig:exact_vs_greedy}
\end{figure}

Figure~\ref{fig:exact_vs_greedy} shows the percentage of instances of Gaussian and Uniform data sets for which the \textsc{MaxSAT} solver terminates within 30 minutes, for all combinations of $n$ and $c$. As $n$ grows, we see that the percentage of failed runs grows as well, but not very strongly when $c$ is large ($c=16$ for Gaussian, $c=8$ and up for Uniform). However, when $c$ is smaller, especially $c=2$, less than half of the instances finish within half an hour, even for smaller point sets, such as $n=300$. To understand this discrepancy, we turn to Figure~\ref{fig:exact_vs_greedy_time}.


Figure~\ref{fig:exact_vs_greedy2} shows the relation between the size of the candidate set, and the running time, for each individual Gaussian instance. The $y$-axis ends at a running time threshold of 30 minutes, and we plot a red point at 30 minutes when a run did not finish. This happens only for the exact solver, with the exception of a single run for the greedy approach that ran out of memory. As the candidate set grows, we see the running time of the exact solver growing much more quickly than the running time of the greedy heuristic: while the greedy approach can still find a solution in under a second for most instances where the candidate set consists of around 100,000 rectangles, the exact solver cannot find a solution within 30 minutes in almost all cases even for 6500 rectangles. We stopped running the exact solver when candidate sets exceed 100,000 rectangles. Note that the 30 minute timeout was chosen for practical reasons only, and many instances which produce more than 5000 rectangles in the candidate set may take much longer than 30 minutes to finish using the exact solver. This prevents the exact solver from being a viable option for all but the smallest instances in practice. On the other hand, the greedy heuristic can still find solutions within 30 minutes, even for candidate sets of up to 100 million rectangles. 
\begin{figure}[tbp]
\captionsetup[subfigure]{justification=centering}
\begin{subfigure}[t]{0.47\columnwidth}    
	    \centering
	    \includegraphics[width=\columnwidth]{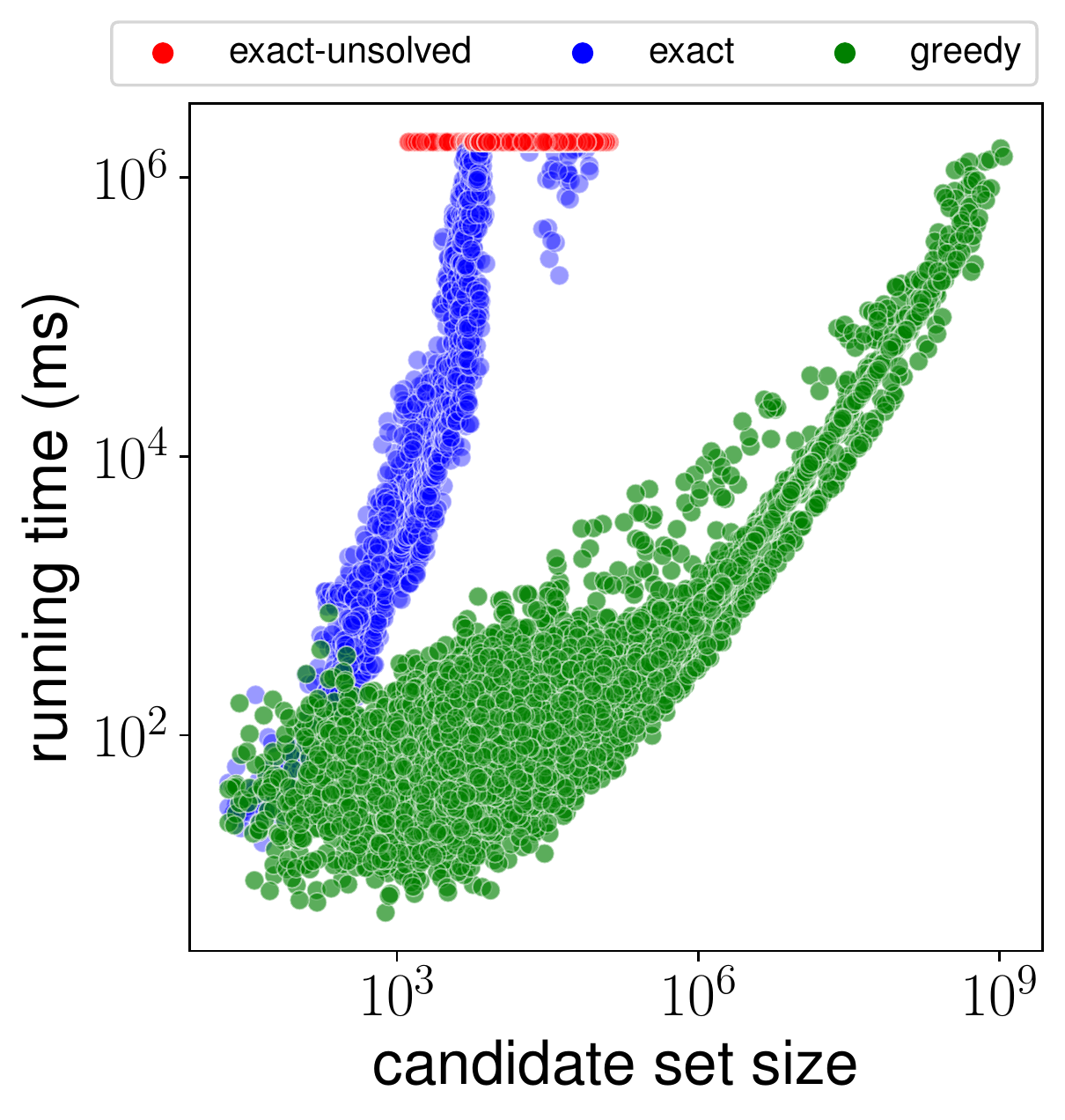}
	    \caption{}
	    \label{fig:exact_vs_greedy2}
    \end{subfigure}
    \begin{subfigure}[t]{0.47\columnwidth}  
        \centering
    	\includegraphics[width=\columnwidth]{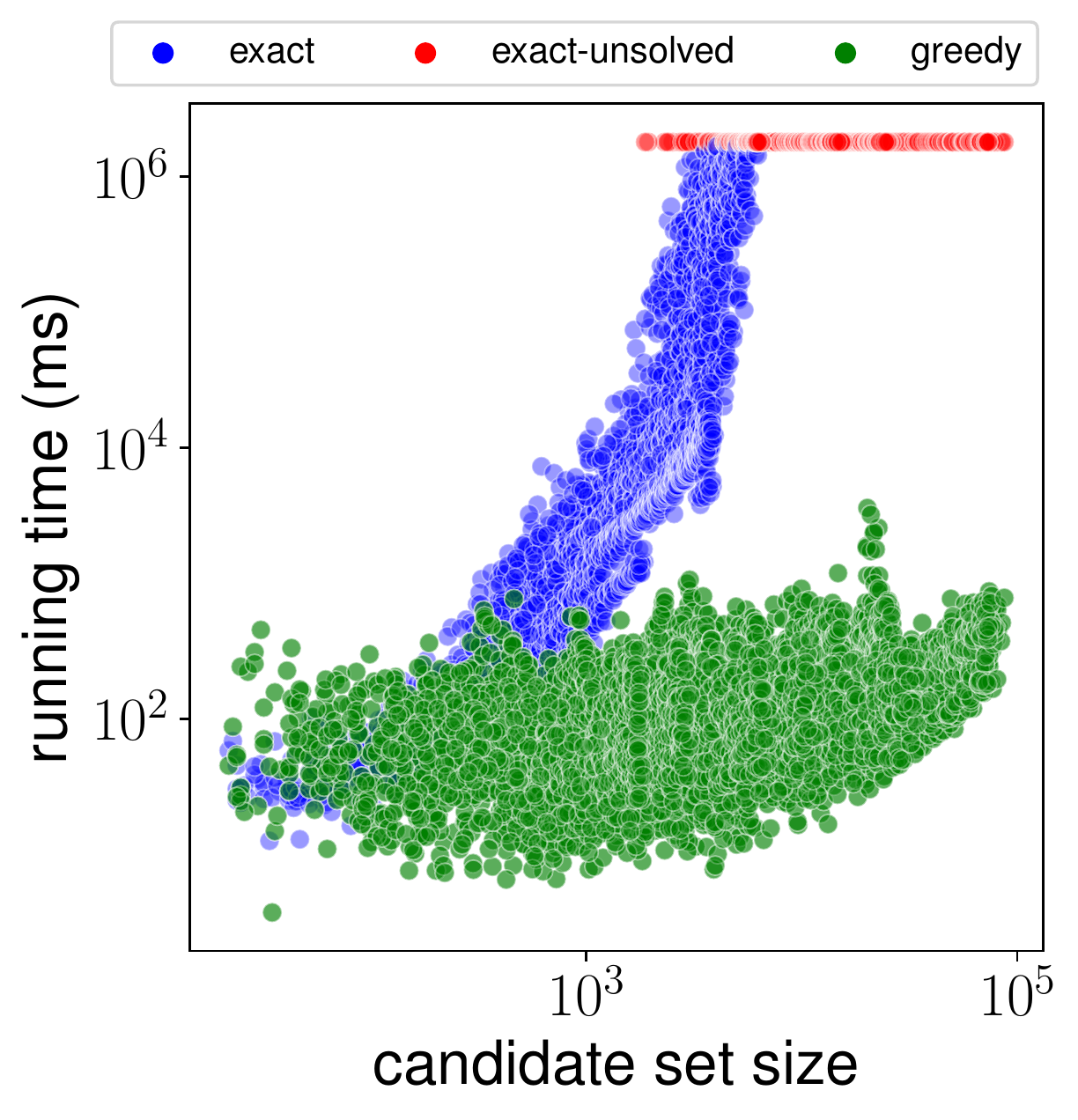}
	    \caption{}
	    \label{fig:exact_vs_greedy_uni_time}
    \end{subfigure}
	\caption{Log-log plot of running time with respect to candidate set size of \textsf{\textbf{(a)}}~Gaussian and \textsf{\textbf{(b)}}~Uniform instances. Red points show instances that failed to finish in 30 minutes, for candidate sets up to $10^5$. On a vertical line, red and blue points sum up to an equal number of green points.}
	\label{fig:exact_vs_greedy_time}

\end{figure}

The results for the Uniform data sets are very similar; see  Figure~\ref{fig:exact_vs_greedy_uni_time}.
The running time of the greedy heuristic grows very slowly as the candidate set size increases. On the other hand, the exact solver again shows a steep increase in running time, as we also saw for Gaussian instances (Figure~\ref{fig:exact_vs_greedy2}).
Similar as for the Gaussian instances, the greedy heuristic shows a significantly faster running time, already for small instance size of $1000$ candidate rectangles, while the heuristic also solves large instances even with up to $100\,000$ rectangles. 

We believe that the growth of the candidate set, for high values of $n$ and especially for small values of $c$, can be explained as follows. During the computation of the candidate set, fewer categories lead to many strips with base rectangles that can be extended often, because the probability of encountering a point of a different color is lower. These cases should occur even more often in Gaussian than in Uniform, and we indeed find that the number of instances that finishes in $30$ minutes is lower for Gaussian, judging by Figure~\ref{fig:exact_vs_greedy}. Additionally, we can confirm that the candidate sets for Uniform instances never exceed $100,000$ rectangles.


\begin{figure}[tbp]
\captionsetup[subfigure]{justification=centering}
    \centering
    \begin{subfigure}[t]{0.47\columnwidth}
    \includegraphics[width=\columnwidth]{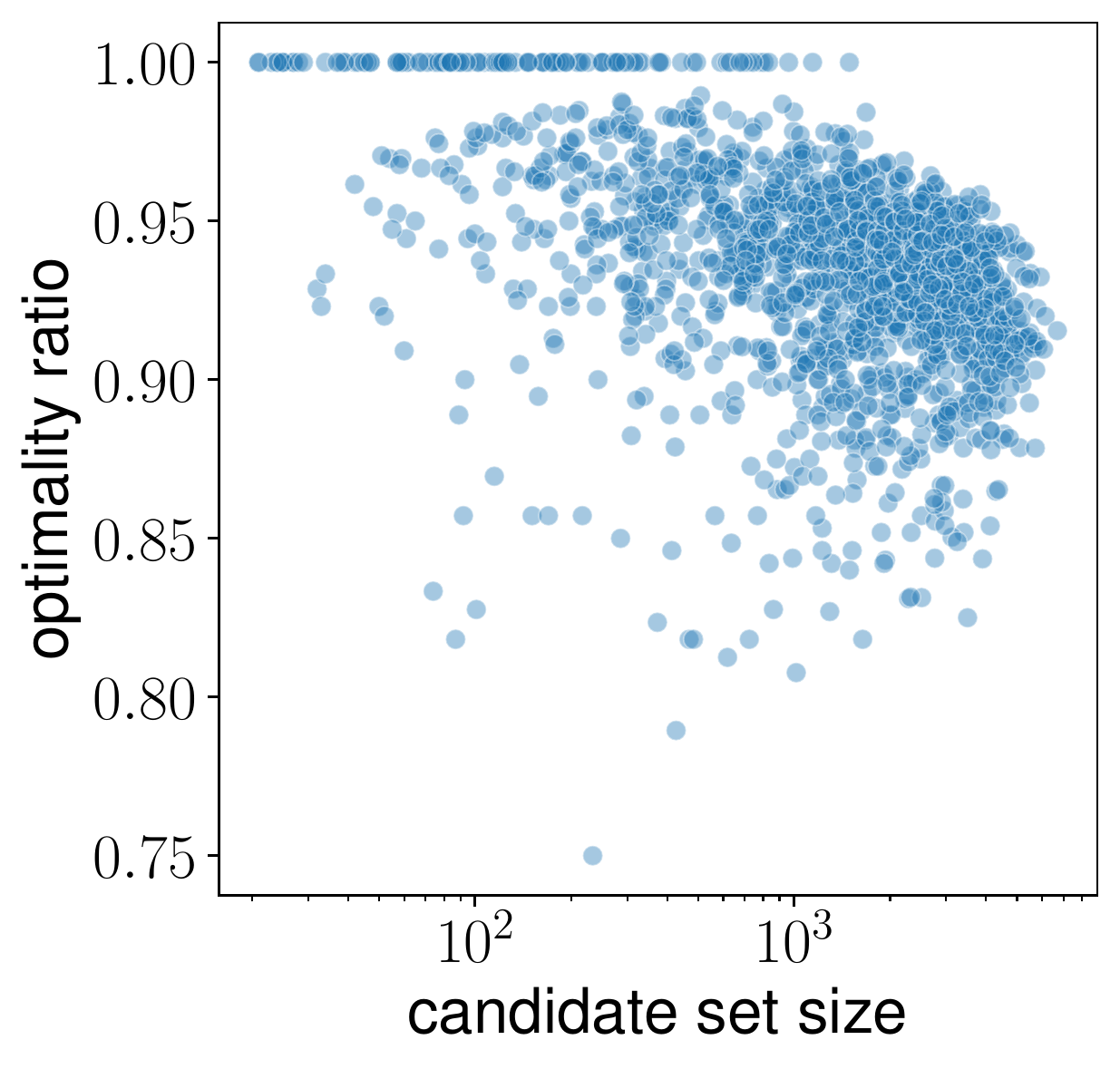}
	\caption{}
	\label{fig:exact_vs_greedy3}    
    \end{subfigure}
    \hfill
    \begin{subfigure}[t]{0.47\columnwidth}
    \centering
    \includegraphics[width=\columnwidth]{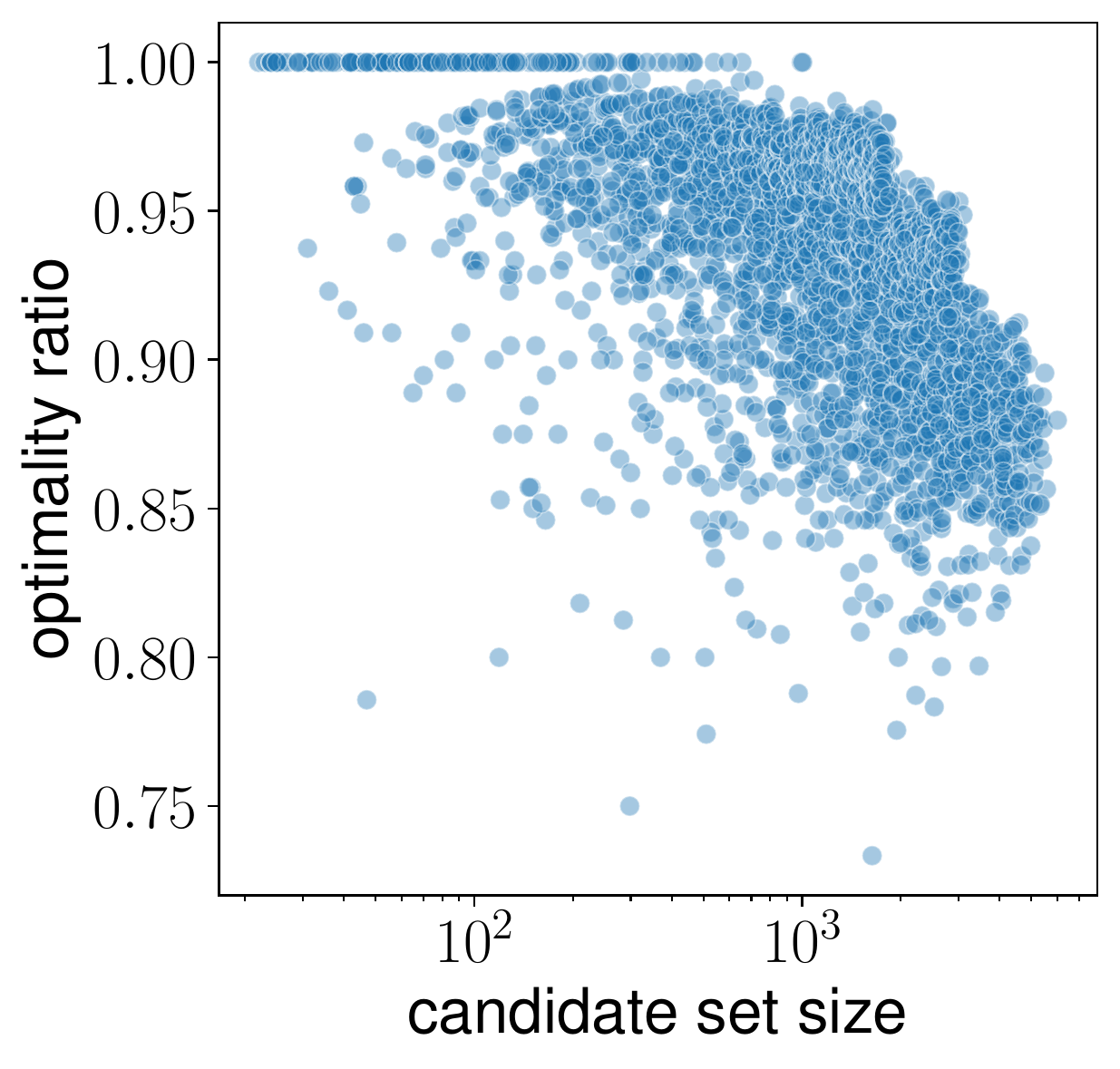}
	\caption{}
	\label{fig:exact_vs_greedy_uni_ratio}
    \end{subfigure}
	\caption{Ratio of the solution size between greedy and optimal solutions for \textsf{\textbf{(a)}}~Gaussian and \textsf{\textbf{(b)}}~Uniform instances with various candidate set sizes.}
\end{figure}

\mypar{Candidate set size and optimality}
Now that we have established that the exact solver, unlike the greedy heuristic, is not viable for larger instances, we proceed to verifying whether the heuristic finds near-optimal solutions. We therefore measure the ratio between the number of rectangles in a solution produced by the greedy heuristic and an optimal solution for a given candidate set. The optimal solution size is found by the exact solver, and hence we can only do this analysis for data sets where the exact solver actually terminates.

In Figure~\ref{fig:exact_vs_greedy3} we show this ratio for all Gaussian instances where the exact solver found a solution. We see that the solutions found by the greedy approach deviate more from optimality, as the candidate set size increases. However, for the majority of evaluated instances, the solution size computed by the greedy heuristic is at most 10\% larger than the optimal solution, and there are only two instances where this difference grows to 20\%.

Just as for the Gaussian instances, we consider only the small Uniform instances, for which the exact solver found a solution.
The plot (Figure~\ref{fig:exact_vs_greedy_uni_ratio}) shows that the greedy heuristic consistently finds solution sizes higher than $70\%$ of the optimum solution size. 
For most of the evaluated instances, the size of the greedy solution is at least $85\%$ of the optimum solution size.
For small instances with at most $100$ candidate rectangles, the greedy solutions reach the optimum sizes for most of the instances.
As expected, the optimality ratios are distributed more widely as the candidate set size increases. 
This again mirrors the same findings for the Gaussian instances .

We also see that the greedy approach is able to find optimal solutions for some instances. This means that our greedy heuristic is not only much faster than the exact solver, but also produces solutions that are still of high quality in terms of solution size. We therefore focus on the solutions produced by the greedy heuristic in our case study.
Overall, the performance of the greedy heuristic shows a well-balanced compromise between running time and solution quality on these instances.

\subsection{Case Study -- U.S. Crop Harvest}
We use our greedy heuristic to produce Worbel visualizations for the US-Crops data set, to see the effectiveness of the visualizations on real-world data. These pictures can easily be produced on a less powerful machine (see Section~\ref{sub:expsetup}).
We used the following parameters to compute the visualizations in this case study (unless mentioned otherwise), $\rho_l=0.75$, $\rho_u=2$, $t=2$, $\rho_t=0.2$, and $f=16$.

We start by considering the full data set, US-Crops, comprising 3076 points. The result is shown in Figure~\ref{fig:teaser} and was computed in 98 seconds. If we look at the Worbel on the map of the U.S. (Figure~\ref{fig:teaser:a}), we see a fairly dense set of labels that leaves few open spaces. Around the sides of the map, we see that labels can extend outwards. The algorithm clusters a few points or assigns labels to singletons/outliers in these cases, resulting a labeling that resembles a classic point feature labeling. Along the east coast, we see many such labels. However, in the middle of the map, many points are clustered and covered by single label. These labels resemble an area labeling of the map, indicating that the areas covered by labels predominantly have points of that same category. For example, this happens with the labels \textsf{Wheat} and \textsf{Corn}, around Montana/North Dakota, and Iowa/Wisconsin/Illinois, respectively. Overall, solving the \RPFA\ problem here seems to result in a labeling that is a nice hybrid between area labeling and (single) point labeling.

If we look at the computed rectangles and the data points that are not covered (Figure~\ref{fig:teaser:b}), we see that most rectangles stack nicely and cover the represented points as intended. The minimum font size prevents some points from being covered; either because their label would intersect some other label in the Worbel (see, e.g., the isolated points in the west of the U.S.), or because their label would intersect too many points with a different label (such as in Alabama/Georgia area and in Oklahoma).


Our findings for US-Crops-filtered are similar. 
Figure~\ref{fig:crops-middle} shows the Worbel for this data set, computed in about 15 seconds. 
Since there are less points, and fewer outliers in this data set, we see larger labels throughout this Worbel. 
The open areas did not contain any points, because of the filtering, and the more densely packed areas leave some points uncovered, again due to the minimum font size. 
We see a similar trend as before, where areas with many points result in tightly packed area labelings, while for sparser areas a single point labeling is produced. 
Also note that, while the parameters for aspect ratio are not very strict, we still produce rectangles close to the aspect ratio of the respective labels.

\begin{figure}
    \captionsetup[subfigure]{justification=centering}
	\begin{subfigure}[t]{0.47\textwidth}
		\includegraphics[width=\textwidth]{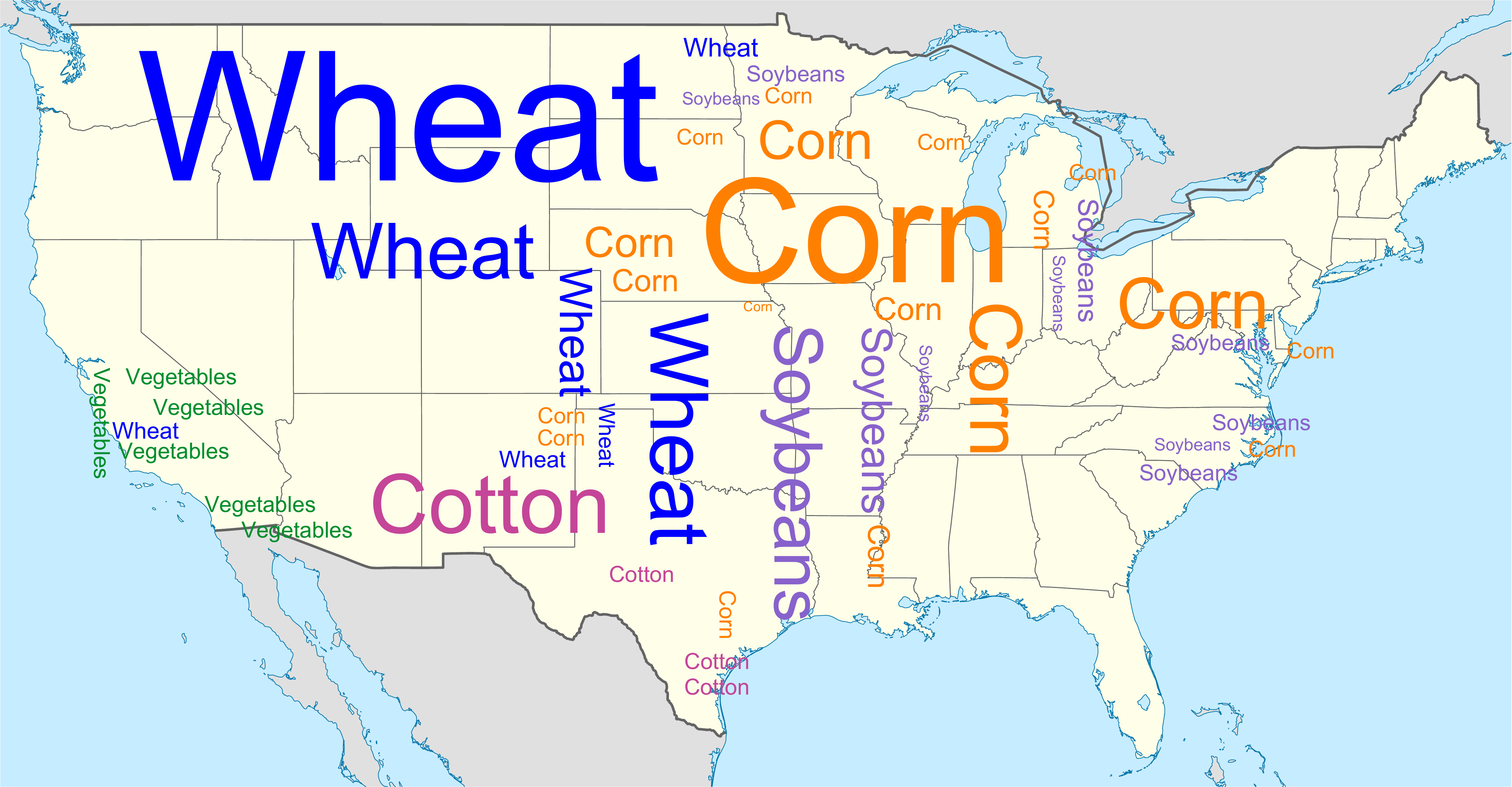}
		\caption{}
	\end{subfigure}
	\begin{subfigure}[t]{0.47\textwidth}
	    \includegraphics[width=\textwidth]{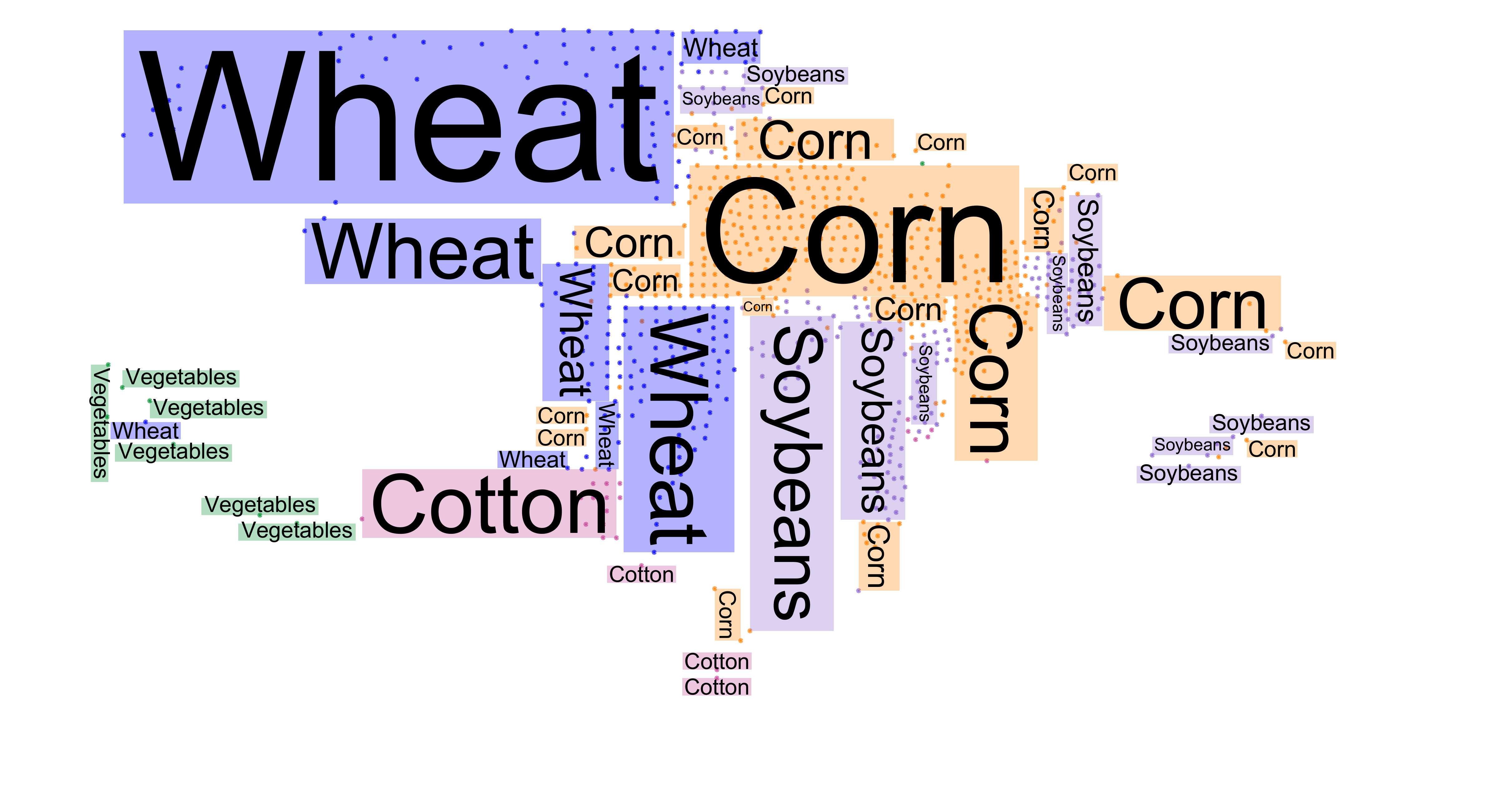}
	    \caption{}
    \end{subfigure}
	\caption{Worbels of the US-Crops-filtered data set, using parameters $\rho_l=0.75$, $\rho_u=2$, $t=2$, $\rho_t=0.2$, and $f=16$, and produced in 15 seconds. \textsf{\textbf{(a)}}~Drawn on a map. \textsf{\textbf{(b)}}~Drawn over the data points; rectangles outline the computed area per label.}
	\label{fig:crops-middle}
\end{figure}

In Figure~\ref{fig:crops-states} we show a Worbel for the US-Crops-State data set, produced in less than a second. This data set contains a single data point per state, and hence is very sparse. In Figure~\ref{fig:crops-states-1} we see the Worbel generated with the input parameters as before. However, since there are very few points, we decided to tune down the tolerance for misrepresentation to zero, since the error we make is significantly higher than for the larger data sets. Setting $t=0$ produces the Worbel in Figure~\ref{fig:crops-states-2}. 
Even though there are only 48 points, we get a clustering of \textsf{Wheat}, \textsf{Corn}, and \textsf{Soybeans} in the central states. Along the boundary of the map, states are labeled as singletons, so the produced Worbel is still a nice hybrid between area labeling and point labeling, even for sparse data.

\begin{figure}[b]
	\begin{subfigure}[t]{0.47\textwidth}
		\includegraphics[width=\textwidth]{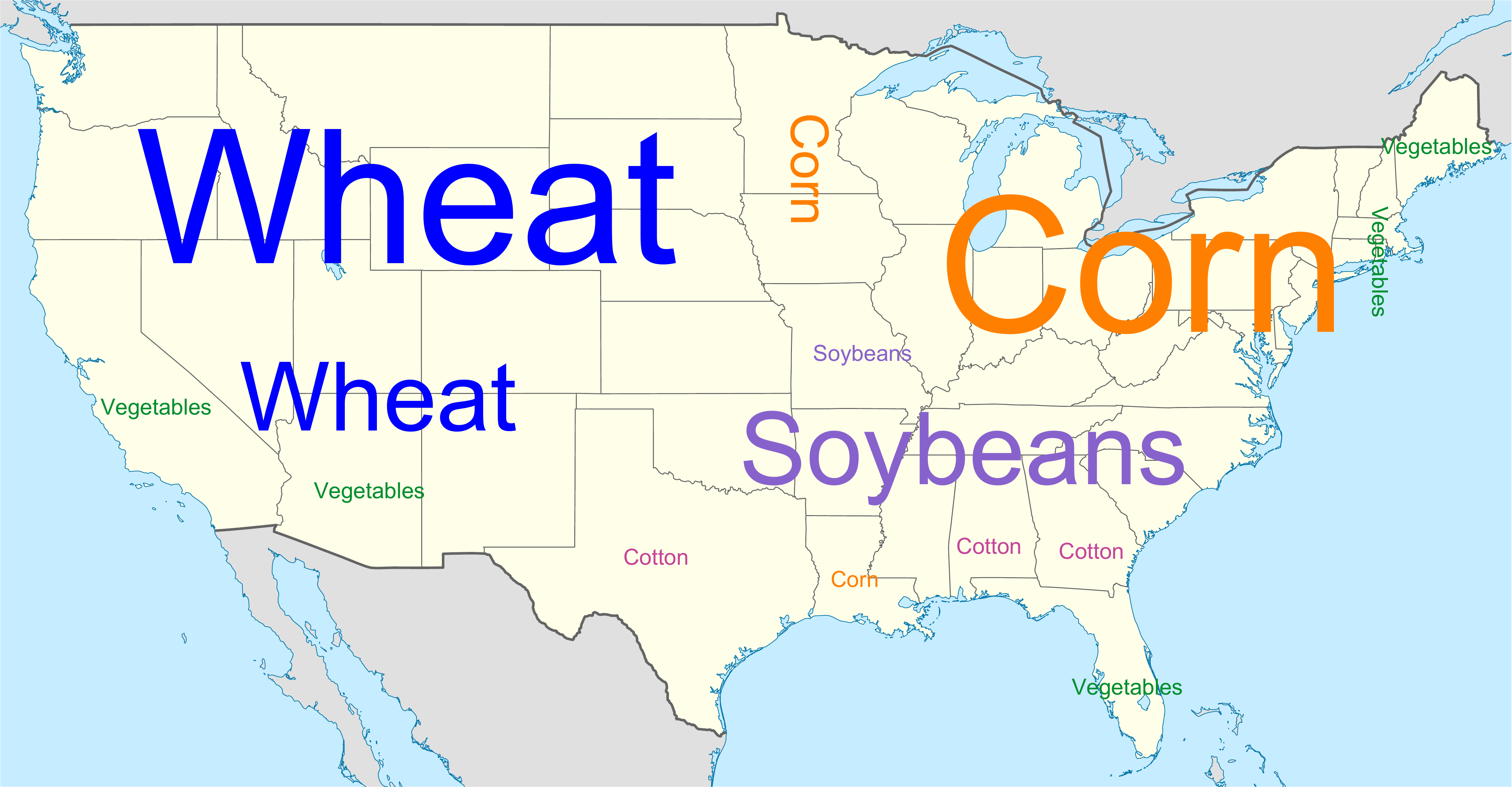}
		\caption{Worbel with tolerance for misrepresentation}
			\label{fig:crops-states-1}
	\end{subfigure}
	\hfill
	\begin{subfigure}[t]{0.47\textwidth}
	    \includegraphics[width=\textwidth]{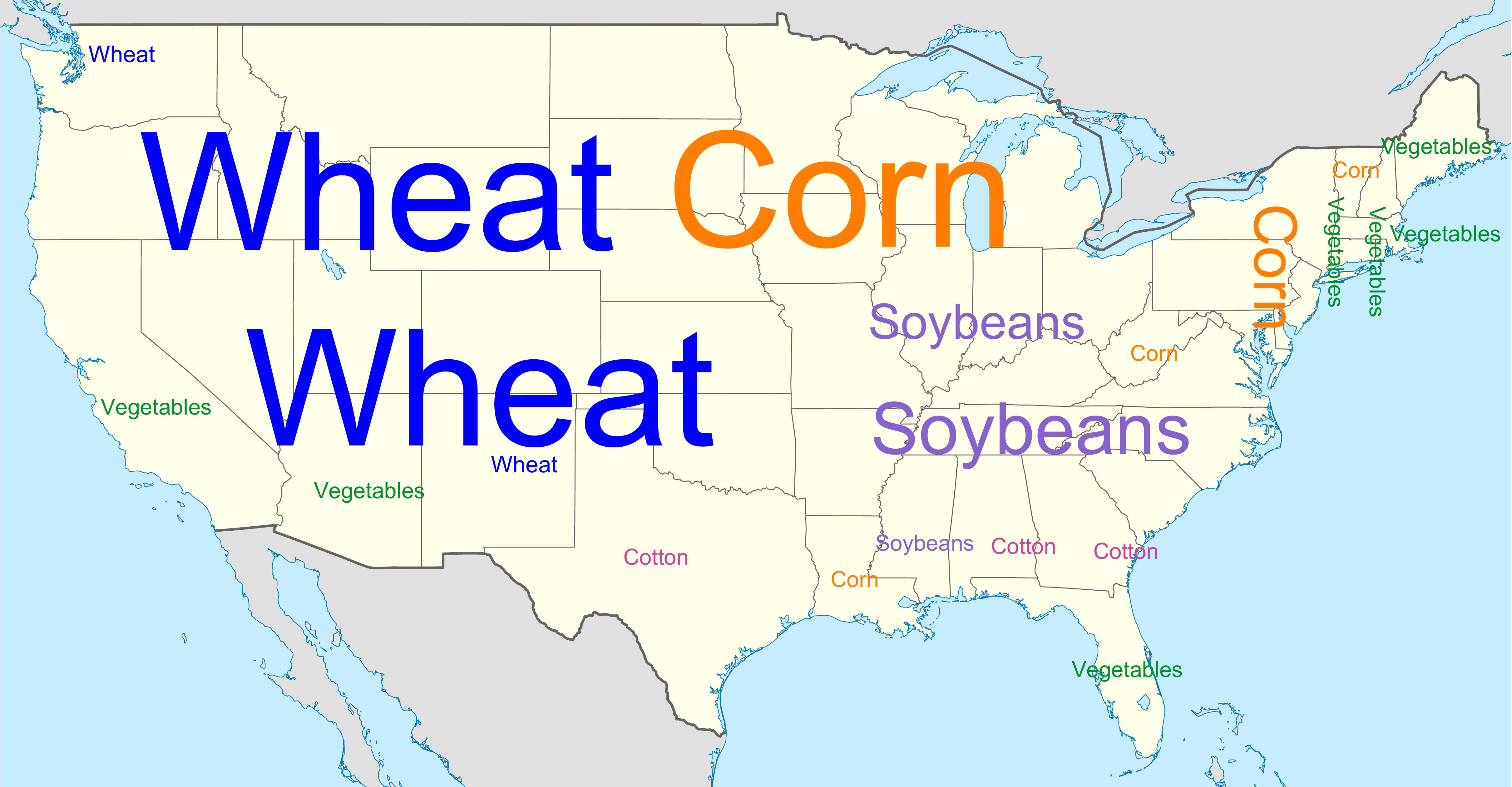}
	    		\caption{Worbel without misrepresentation}
	    	\label{fig:crops-states-2}
    \end{subfigure}

	\caption{Worbels of the US-Crops-state data set, overlaid on a map, and produced in $<$1 second. \textsf{\textbf{(a)}}~Using parameters $\rho_l=0.75$, $\rho_u=2$, $t=2$, $\rho_t=0.2$, and $f=16$. \textsf{\textbf{(b)}}~Using parameters $\rho_l=0.75$, $\rho_u=2$, $t=0$, $\rho_t=0.2$, and $f=16$}
	\label{fig:crops-states}
\end{figure}

\section{Discussion and Concluding Remarks}\label{sec:conclusion}
The Worbel hybrid visualization scheme proposed in this article is designed to provide an aggregated textual labeling for geospatial data points. Our contributions include a formalization of the problem underlying the aggregation of geospatial data via the \RPFA\ problem, and a proof establishing that \RPFA\ is \NP-complete. 

In view of this algorithmic lower bound, we propose and implement two algorithms for computing Worbels. The first is designed to find theoretically optimal Worbels for small to medium-size instances of \RPFA, and is based on a \textsc{MaxSAT} model (which turned out to be significantly more scalable than ILP and CSP-based approaches in our initial experiments). 
We complemented this with an iterative heuristic algorithm. Our experiments showed that the running time performance of this heuristic outscales the exact \textsc{MaxSAT} approach by several orders of magnitude and can produce Worbels for instances with thousands of data points in a matter of seconds, even on standard work stations. At the same time, for the vast majority of tested instances the Worbels produced by the heuristic only exhibit an overhead of at most 10\% additional aggregate labels.

The experimental results also give rise to the practically interesting option of combining the strengths of both approaches. It turned out that the size of the candidate set of rectangles is a much better predictor of the computation time than the size of the point set or the number of different point categories alone. Since both methods begin by computing these candidate sets, we can compute them and then depending on the resulting size either invoke the \textsc{MaxSAT} solver (if we expect fast termination) or invoke the greedy heuristic (if the candidate set size exceeds an adjustable threshold). Further research should investigate more sophisticated methods to prune the candidate set by discarding rectangles which are identified as suboptimal without compromising the overall solution quality, and thus accelerate computation times for both algorithms.


An interesting observation from our case study on the US crop data is that the resulting Worbels actually combine features of area and point labeling. In homogeneous groups of points with the same label, a single large rectangle can be placed, effectively labeling the entire covered area. In areas with more heterogeneous point features, labels for smaller clusters or even singleton points are placed, much as in traditional point feature label placement. 
As a result, Worbels implicitly decrease the information density of text-based data visualizations on maps in areas of low entropy by placing fewer labels, while showing more detailed and less aggregated information in areas of higher entropy by placing more but smaller labels.
An important open question is whether these promising properties of Worbels translate into quantifiable usability and readability advantages when compared to maps with individual point labels and/or less precise geo word clouds or tag maps.

Ultimately, we believe that integrating Worbels into interactive visual analytics tools or GIS systems provides an added value for the users of these systems. One opportunity in this context is that  analysts can modify and improve an initially computed Worbel by locally fine-tuning quality parameters or adding context-dependent constraints and thus creating Worbels according to their individual preferences in a human-in-the-loop process. This will require more advanced algorithms for computing visually stable Worbels on a dynamic input.

\bibliographystyle{plainurl}
\bibliography{lit_gpl_arxiv.bib}

\end{document}